\documentclass[journal]{IEEEtran}
\usepackage{amsmath,amsfonts}
\usepackage{algorithmic}
\usepackage{algorithm}
\usepackage{array}
\usepackage[caption=false,font=normalsize,labelfont=sf,textfont=sf]{subfig}
\usepackage{textcomp}
\usepackage{stfloats}
\usepackage{url}
\usepackage{verbatim}
\usepackage{graphicx}
\usepackage{cite}
\usepackage{balance}

\usepackage{amssymb, amsthm}
\newtheorem{theorem}{Theorem}
\newtheorem{proposition}{Proposition}
\newtheorem{lemma}{Lemma}

\begin{document}

\title{Federated Calibration of Motion Uncertainty for UAV Tracking in Multi-Operator ISAC Systems}

\author{Qiming~Li,~\IEEEmembership{Graduate~Student~Member,~IEEE,}
        Lei~Zhang,~\IEEEmembership{Senior~Member,~IEEE,}
        Muhammad~Ali~Imran,~\IEEEmembership{Fellow,~IEEE}
        and~Lina~Mohjazi,~\IEEEmembership{Senior~Member,~IEEE}

\thanks{Qiming~Li, Lei~Zhang, Muhammad~Ali~Imran and Lina~Mohjazi are with the James Watt School of Engineering, University of Glasgow, G12 8QQ Glasgow U.K. (e-mail: q.li.5@research.gla.ac.uk; Lei.Zhang@glasgow.ac.uk; Muhammad.Imran@glasgow.ac.uk; Lina.Mohjazi@glasgow.ac.uk).}
}

\maketitle

\begingroup
\renewcommand\thefootnote{}
\footnotetext{This work has been submitted to the IEEE for possible publication.
Copyright may be transferred without notice, after which this version may no longer be accessible.}
\addtocounter{footnote}{-1}
\endgroup

\begin{abstract}
In integrated sensing and communication (ISAC) systems involving multiple operators, unmanned aerial vehicle (UAV) tracking can benefit from sensing at multiple base stations, while raw sensing records, channel information, and transmit decisions remain local to each operator. The resulting problem is to obtain a reliable UAV prediction from different track estimates and use it to guide transmission without centralizing their data or control variables. To resolve this, we propose a two-timescale framework comprising slow-timescale motion-uncertainty calibration and online track fusion with predictive beamforming. Local radar innovations calibrate the uncertainty assigned by each tracker to unmodeled UAV motion, with only the resulting parameters shared across operators. During online tracking, covariance intersection (CI) combines local estimates without requiring knowledge of correlations between their estimation errors. The fused prediction then guides each operator’s transmission design subject to communication QoS, power, and tracking constraints. We further establish exact rank one recovery for the communication covariances and finite interval bounds on local tracking uncertainty. Simulations show improved consistency of tracking uncertainty under substantial motion model mismatch. CI avoids overconfidence under correlated errors, while using the current fused prediction reduces RMSE by about 31\% relative to transmission designed for communication alone, with less than 2\% additional power. Ablation further shows that the power difference between current and outdated predictions is driven mainly by predictive covariance rather than predictive mean.

\end{abstract}

\begin{IEEEkeywords}
Integrated sensing and communication (ISAC), multi-operator tracking, federated uncertainty calibration, covariance intersection (CI), predictive beamforming.
\end{IEEEkeywords}

\section{Introduction}
\label{sec:introduction}

\IEEEPARstart{A}{ccurate} knowledge of the unmanned aerial vehicle (UAV) state is important for sensing-aware transmission in future wireless networks, while communication services must be maintained simultaneously. Integrated sensing and communication (ISAC) provides a natural framework for supporting both functions \cite{mozaffari2019uavTutorial,liu2022isacSurvey}. In such systems, UAV tracking and transmission design are inherently coupled: the estimated UAV state affects the transmission design, while the selected transmission determines the quality of subsequent sensing observations. Tracking and transmission should therefore be considered in a closed loop \cite{liu2020radar}.

Spatially separated base stations (BSs) can provide complementary sensing views and improve tracking availability~\cite{cheng2024networkedISAC}, while sensing and communication (S\&C) links may become unavailable under dynamic blockage \cite{al2024enhancing}. In areas covered by multiple operators, BSs belonging to different networks could provide a potential source of such spatial diversity without requiring additional sensing sites. Exploiting this diversity, however, should preserve the operational independence of the participating operators. Existing network-sharing arrangements show that infrastructure can be shared while operators retain separate network control \cite{khan2011networkSharing}. Accordingly, we consider limited inter-operator exchange of tracking information rather than joint radio-resource management: raw radar measurements, CSI, user information, and transmission decisions remain within each operator.

Within these information boundaries, exchanging only a predictive mean of the UAV position and velocity is insufficient for transmission design; operators also need to know how reliable that prediction is. The predicted UAV position determines the nominal sensing geometry, including range, angle, path loss, and array response, while the associated uncertainty determines how widely directional sensing power should be distributed around the predicted UAV position \cite{meng2023vehicular}. An underestimated uncertainty may concentrate sensing power too narrowly around the predicted position, whereas overestimated uncertainty may unnecessarily increase transmit power. However, obtaining reliable and current prediction uncertainty involves three distinct difficulties. First, the motion model used by a local tracker may not match the UAV’s actual maneuvers, causing its uncertainty estimate to become inaccurate \cite{li2003surveyManeuveringTargets,huang2018novel}. Second, local trackers may have correlated estimation errors because they share prior information and common target dynamics, while the exact correlations are unknown \cite{julier1997nonDivergent,chen2002ciRevisited}. Finally, even a well calibrated prediction becomes outdated if it does not incorporate the latest sensing observations \cite{bar2001estimation,liu2020radar}. The central question is therefore how to provide operators with a reliable and up to date UAV prediction, together with its uncertainty, while keeping radar histories and radio control variables local.

To address these challenges, we develop a two-timescale closed-loop framework. At the slower time scale, each operator uses local radar innovations to calibrate the uncertainty associated with unmodeled UAV motion, while only the resulting calibration parameters are shared. During online tracking, covariance intersection (CI) combines the local estimates without requiring knowledge of the correlations between their estimation errors \cite{chen2002ciRevisited}. The fused UAV state estimate and covariance are then propagated one slot ahead, and the resulting predictive mean and covariance are used with each operator's local CSI, communication requirements, and power budget to design its next transmission.

\begin{table*}[t]
\centering
\caption{Comparison with representative related works.}
\label{tab:related_work}
\footnotesize
\renewcommand{\arraystretch}{1.05}

\begin{tabular*}{\textwidth}{@{\extracolsep{\fill}}lcccc@{}}
\hline
Work &
Uncertainty treatment &
Track fusion &
Transmission coupling &
Coordination scope\\
\hline

Liu \emph{et al.} \cite{liu2020radar} &
Filter prediction &
-- &
Predictive beamforming &
Single system \\

Cheng \emph{et al.} \cite{cheng2024networkedISAC} &
-- &
-- &
Coordinated beamforming &
Multi-BS \\

Meng \emph{et al.} \cite{meng2023vehicular} &
Model-aware tracking &
-- &
Beam adaptation &
Single system \\

Huang \emph{et al.} \cite{huang2018novel} &
Adaptive covariance &
-- &
-- &
-- \\

Chen \emph{et al.} \cite{chen2002ciRevisited} &
-- &
CI &
-- &
Distributed estimation \\

\textbf{This work} &
\textbf{Federated calibration} &
\textbf{CI} &
\textbf{Predictive covariance-aware design} &
\textbf{Multi-operator} \\
\hline
\end{tabular*}
\end{table*}

\subsection{Related Work}
\label{subsec:related_work}

The literature most relevant to this work spans networked ISAC transmission design, maneuvering-target tracking, covariance calibration and track fusion, and federated optimization. Signal-processing treatments establish the waveform, channel, and receiver foundations for ISAC \cite{zhang2021jcrOverview}, while system-level studies characterize the tradeoff between S\&C objectives \cite{liu2020jointRadarComm,li2026distributed}. Optimization-based designs make sensing accuracy explicit through Cram\'{e}r-Rao bound criteria\cite{liu2022crbJrc,li2026distributed}, and networked ISAC extends this principle to coordinated transmission across multiple BSs under S\&C constraints \cite{cheng2024networkedISAC}. Related multicell MIMO studies develop coordinated beamforming under downlink QoS requirements \cite{gesbert2010multicellMIMO,dahrouj2010coordinatedBeamforming}. Convex precoding yields tractable SINR-constrained formulations \cite{wiesel2006linearPrecoding}, whereas worst-case designs protect QoS against bounded channel uncertainty \cite{vucic2009robustQoS}. These works provide the transmission-design foundations used in this paper. Their application to target-aware control requires sufficiently reliable knowledge of the target state and its future evolution, which motivates the following discussion of maneuvering-target tracking.

Maneuvering-target tracking has been studied using standard kinematic models, multiple-model methods such as interacting multiple model filtering, adaptive approaches, and more recently learning-based predictors \cite{li2003surveyManeuveringTargets,rong2005survey,huang2018novel,liu2023digital}. These methods provide increasingly flexible representations of target motion. In contrast, this work retains a fixed nearly constant velocity (NCV) model as a common low-dimensional tracking surrogate and calibrates the residual process uncertainty caused by unmodeled maneuvers. The objective is not to replace richer motion predictors, but to obtain a common Gaussian state and covariance representation that can be calibrated locally, fused across operators, and used in downstream transmission design. This setting also differs from UAV trajectory optimization, where the UAV path itself is a network control variable \cite{jing2024isac}.

Reliable use of such predictions further requires meaningful uncertainty estimates and appropriate fusion of distributed tracks. Innovation-based covariance identification and adaptive covariance estimation have been extensively studied in \cite{huang2018novel,mehra1970identification,odelson2006autocovariance}, while distributed Kalman filtering combines information across sensor networks \cite{olfatiSaber2007distributedKalman}. When inter-track error cross-correlations are unavailable, CI provides a conservative Gaussian fusion rule without explicitly reconstructing them \cite{julier1997nonDivergent,chen2002ciRevisited}. These methods address uncertainty estimation or track fusion, but do not directly connect the resulting predictive covariance to multi-operator ISAC transmission control.

Federated learning (FL), exemplified by FedAvg \cite{mcmahan2017fedAvg}, aggregates locally computed updates without centralizing the underlying training data \cite{mcmahan2017fedAvg}, and wireless FL exposes the communication and deployment constraints of such coordination \cite{niknam2020wirelessFL}. Heterogeneous local objectives motivate proximal federated optimization \cite{li2020fedprox}, while personalized FL distinguishes globally shared structure from client-specific models \cite{tan2023personalizedFL}. Here, \textit{federated} refers specifically to the local estimation and aggregation of a low-dimensional process-covariance parameter rather than to the training of a trajectory-prediction model. Radar histories and local innovations remain at the individual operators. Table~\ref{tab:related_work} summarizes representative related works, highlighting how this work integrates federated uncertainty calibration, correlation-aware track fusion, and predictive transmission design while keeping radio control local to each operator.

\subsection{Contributions}
\label{subsec:contributions}

The main contribution is their integration into a two-timescale closed-loop ISAC framework under operator-local data and radio control. Specifically:

\begin{itemize}
    \item \textbf{Federated motion-uncertainty calibration:}
    We develop a procedure for calibrating the process uncertainty of a fixed NCV tracker from locally retained radar data, without sharing measurement histories across operators. Each operator estimates the residual process covariance from its local radar innovations, while only a low-dimensional calibration parameter is exchanged for aggregation. The resulting covariance is then fixed for online prediction, providing a common uncertainty model that accounts for motion not captured by the nominal tracker.

    \item \textbf{Track fusion under unknown correlations:}
    We combine synchronized Gaussian track summaries from different operators without requiring their error cross-correlations to be known or exchanged. Weighted covariance intersection is used to obtain the fused state and covariance, which are subsequently propagated for transmission design. We further derive a CI-induced covariance bound relation that connects the fused and local covariances and supports the finite-interval analysis.

    \item \textbf{Predictive uncertainty-aware beamforming:}
    We incorporate the fused one-step predictive mean and covariance into each operator's local transmit-covariance design under communication signal-to-interference-plus-noise ratio (SINR), power, and sensing requirements. For the free residual covariance formulation, we establish exact rank-one recovery of the communication covariances and derive conditional finite-interval bounds that relate the tracking constraint in the transmission design to local tracker covariances.
\end{itemize}

\noindent\textit{Notations:}
Scalars, vectors, and matrices are denoted by lowercase, bold lowercase, and bold uppercase letters, respectively. Sets are denoted by calligraphic letters. The operators $(\cdot)^{\mathsf{T}}$ and $(\cdot)^{\mathsf{H}}$ denote transpose and Hermitian transpose. The Euclidean norm is denoted by $\|\cdot\|_2$. For a matrix $\mathbf A$, $|\mathbf A|$, $\operatorname{tr}(\mathbf A)$, and $\mathbf A\succeq\mathbf0$ denote determinant, trace, and positive semidefiniteness, respectively. The real and complex fields are denoted by $\mathbb R$ and $\mathbb C$. Unless otherwise specified, indices $c$, $b$, $k$, and $i$ refer to an operator, a BS, a communication user, and a sensing design point, respectively.

\section{System Model}
\label{sec:system_model}

\begin{figure*}[!t]
\centering
\includegraphics[width=0.9\textwidth]{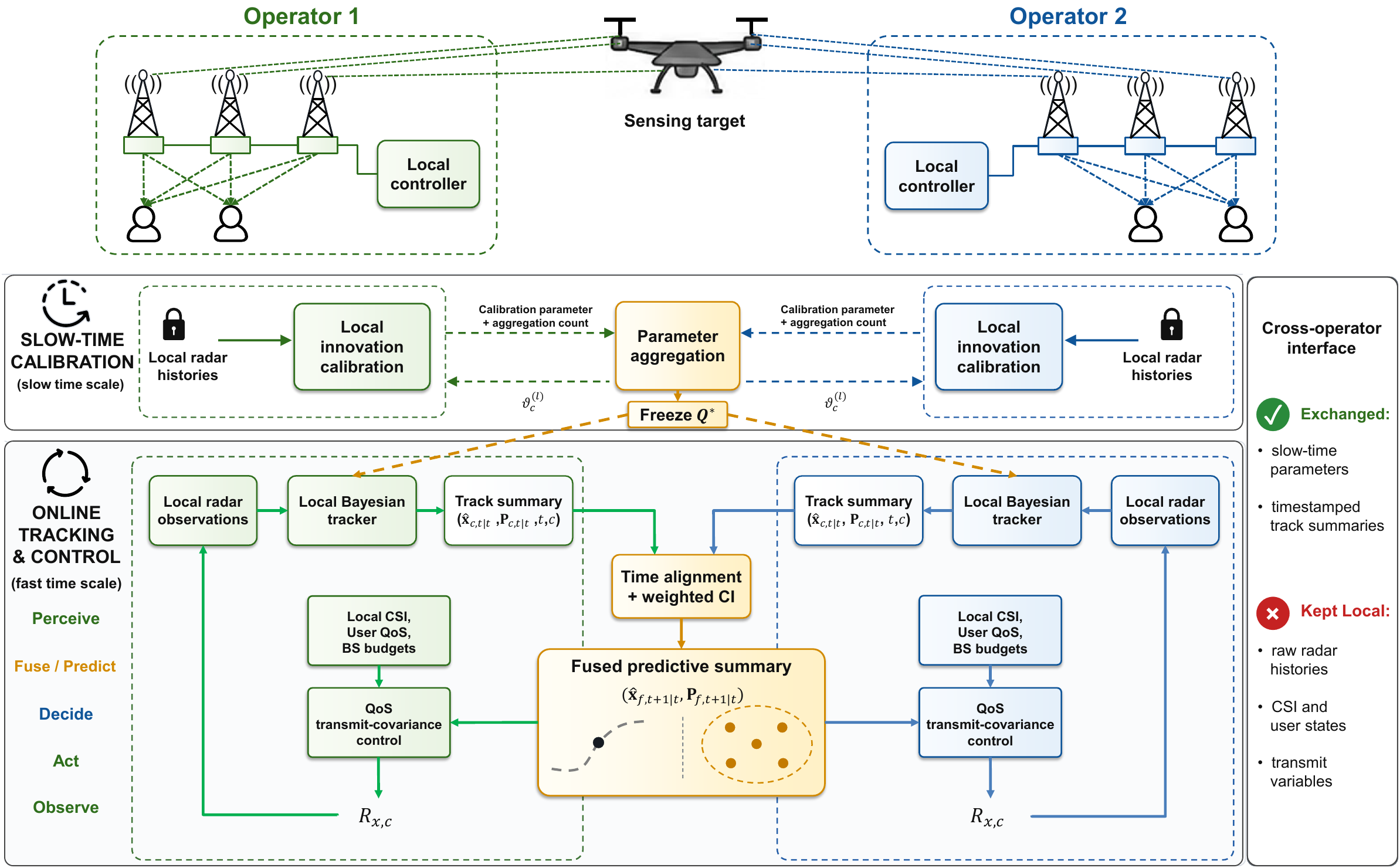}
\caption{Two-timescale architecture for multi-operator ISAC. Calibration parameters are exchanged at the slower time scale, while Gaussian track summaries are exchanged online; radar histories, CSI, user states, and transmission variables remain within each operator.}
\label{fig:system_model_covariance_interface}
\end{figure*}

Fig.~\ref{fig:system_model_covariance_interface} summarizes the two-timescale architecture and the flow of information from sensing and tracking to track fusion and transmission control.

\subsection{Multi-Operator ISAC Network}
\label{subsec:network_model}

Consider a multi-operator ISAC network with operator set $\mathcal C=\{1,\ldots,C\}$. Operator $c\in\mathcal C$ controls a CoMP cluster of BSs $\mathcal B_c=\{1,\ldots,B_c\}$ and serves communication users $\mathcal K_c=\{1,\ldots,K_c\}$. BS $b\in\mathcal B_c$ is located at
\begin{equation}
    \mathbf s_{c,b}
    =
    [x_{c,b},y_{c,b},h_{\rm BS}]^{\mathsf T}.
    \label{eq:bs_location}
\end{equation}
It has $N_t$ transmit antennas, giving the aggregate cluster dimension $M_c=B_cN_t$. A common $N_t$ is used for compact notation. CoMP coordination is confined to the BSs belonging to the same operator and no joint transmission or beamforming is performed across operators. At slot \(t\), operator \(c\) applies a total transmit covariance \(\mathbf R^{\rm tx}_{c,t}\in\mathbb C^{M_c\times M_c}\), \(\mathbf R^{\rm tx}_{c,t}\succeq 0\), which determines both downlink transmission and directional sensing power. Its detailed parameterization is given in Section~\ref{subsec:downlink_covariance_model}.

Operators use orthogonal spectrum resources, and the evaluated model therefore neglects both cross-operator communication interference and cross-operator sensing interference. Cooperation between operators therefore occurs through the information exchange described in Section~\ref{subsec:information_boundary}.

For analytical simplicity, we assume the UAV is a fixed-altitude sensing target. Its horizontal kinematic state and position are
\begin{equation}
    \mathbf x_t
    =
    [x_t,y_t,v_{x,t},v_{y,t}]^{\mathsf T},
    \qquad
    \mathbf q_t=[x_t,y_t]^{\mathsf T}.
    \label{eq:uav_state}
\end{equation}
For radar geometry, path loss, and array response, the horizontal position is embedded in three dimensions as $[\mathbf q_t^{\mathsf T},h_{\rm U}]^{\mathsf T}$, where the altitude $h_{\rm U}$ is known and fixed. Communication QoS is imposed only on the users in $\mathcal K_c$.

The slot convention is causal and uniform throughout the paper. After processing the observations available at slot $t$, the fusion layer forms $(\hat{\mathbf x}_{f,t|t},\mathbf P_{f,t|t})$. Its one-step prediction $(\hat{\mathbf x}_{f,t+1|t},\mathbf P_{f,t+1|t})$ informs the transmit covariance $\mathbf R_{c,t}^{\rm tx}$, and that transmission produces the post-acquisition radar observation at slot $t+1$. Thus, an action indexed by $t$ never uses an observation indexed by $t+1$.

\subsection{Tracking Information Exchange}
\label{subsec:information_boundary}

Under the Gaussian filtering approximation, coordination between operators uses two compact messages:
\begin{itemize}
    \item on the slow time scale, operator $c$ reports a low-dimensional calibration parameter and a scalar count used for aggregation; and
    \item on the slot time scale, operator $c$ reports a timestamped Gaussian track summary $(\hat{\mathbf x}_{c,t|t},\mathbf P_{c,t|t},t,c)$.
\end{itemize}
The fixed target identifier and coordinate-frame information are treated as metadata and omitted from the tuple for compact notation. Raw radar data and innovations, CSI, user information, and transmit variables are not exchanged.

All local trackers are initialized from the same acquisition posterior $(\hat{\mathbf x}_{0|0}^{\rm acq},\mathbf P_{0|0}^{\rm acq},t=0)$. They subsequently process their own sensing observations and evolve independently. Since the local estimates inherit common prior information while tracking the same target, their estimation errors may be correlated across operators. CI is therefore applied to time-aligned Gaussian summaries of the same registered target without requiring the corresponding error cross-covariances.

\subsection{Fixed NCV Motion Model}
\label{subsec:motion_model}

Each local tracker uses a fixed-altitude NCV model from the standard family of maneuvering-target dynamic models \cite{li2003surveyManeuveringTargets}. Maneuvers not represented by the nominal NCV dynamics are captured through the residual process covariance \(\mathbf Q^\star\), which is calibrated in Section~\ref{subsec:federated_motion_learning}. The state transition is
\begin{equation}
    p_{\rm trk}(\mathbf x_{t+1}\mid\mathbf x_t)
    =
    \mathcal N\!\left(
    \mathbf x_{t+1};\mathbf F\mathbf x_t,\mathbf Q^\star
    \right),
    \label{eq:single_motion_model}
\end{equation}
where \(\mathbf F\in\mathbb R^{4\times4}\) and \(\mathbf G\in\mathbb R^{4\times2}\) are
\begin{equation}
    \mathbf F=
    \begin{bmatrix}
        \mathbf I_2 & \Delta t\,\mathbf I_2\\
        \mathbf0 & \mathbf I_2
    \end{bmatrix},
    \qquad
    \mathbf G=
    \begin{bmatrix}
        \frac{\Delta t^2}{2}\mathbf I_2\\
        \Delta t\,\mathbf I_2
    \end{bmatrix}.
    \label{eq:ncv_block_matrices}
\end{equation}
Equivalently, the process perturbation is generated by a slot-wise random acceleration
\begin{equation}
    \mathbf a_t\sim \mathcal N\!\left( \mathbf 0, \operatorname{diag}(q_x^\star,q_y^\star) \right), \qquad \mathbf w_t=\mathbf G\mathbf a_t.
\end{equation}
Hence, the deployed process covariance is
\begin{equation}
    \mathbf Q^\star
    =
    \mathbf G
    \operatorname{diag}(q_x^\star,q_y^\star)
    \mathbf G^{\mathsf T},
    \qquad
    q_x^\star,q_y^\star\geq q_{\min}>0.
    \label{eq:main_constant_process_covariance}
\end{equation}
Here, \(q_x^\star\) and \(q_y^\star\) are the slot-wise horizontal acceleration variances. The isotropic specialization defines $q^\star\triangleq q_x^\star=q_y^\star$. Once calibrated, $\mathbf Q^\star$ remains fixed throughout the online run and is used for both local and fused-state predictions.

\subsection{Radar Observation and Availability Model}
\label{subsec:radar_observation_model}

For $t\geq0$, the applied total covariance $\mathbf R_{c,t}^{\rm tx}$ determines the directional sensing power and hence the precision of the radar observations available in the next slot. If BS \(b\) has valid geometry and an unoccluded line-of-sight (LoS) channel, it reports
\begin{subequations}
\begin{equation}
\mathbf z_{c,b,t+1}
    =
    \mathbf g_{c,b}(\mathbf x_{t+1})
    +
    \mathbf n_{c,b,t+1},
    \label{eq:radar_measurement_model}
\end{equation}
\begin{equation}
    \mathbf g_{c,b}(\mathbf x)
    =
    [r_{c,b}(\mathbf x),\varphi_{c,b}(\mathbf x),\dot r_{c,b}(\mathbf x)]^{\mathsf T}.
\end{equation}
\end{subequations}
The measurement noise is generated as $\mathbf n_{c,b,t+1} \sim \mathcal N \left( \mathbf 0, \mathbf R^{\rm true}_{c,b,t+1} \right),$ where \(\mathbf R^{\rm true}_{c,b,t+1}\in\mathbb R^{3\times3}\) is the covariance associated with the realized geometry and applied transmission. Let $\boldsymbol\delta_{c,b}(\mathbf x)=
[x-x_{c,b},y-y_{c,b},h_{\rm U}-h_{\rm BS}]^{\mathsf T}$. The measurement components are
\begin{subequations}
\begin{align}
    r_{c,b}(\mathbf x)
    &=
    \|\boldsymbol\delta_{c,b}(\mathbf x)\|_2,
    \label{eq:range_model}\\
    \varphi_{c,b}(\mathbf x)
    &=
    \operatorname{atan2}(y-y_{c,b},x-x_{c,b}),
    \label{eq:azimuth_model}\\
    \dot r_{c,b}(\mathbf x)
    &=
    \frac{
    ([x,y]^{\mathsf T}-[x_{c,b},y_{c,b}]^{\mathsf T})^{\mathsf T}
    [v_x,v_y]^{\mathsf T}}
    {r_{c,b}(\mathbf x)}.
    \label{eq:radial_velocity_model}
\end{align}
\end{subequations}

Availability is modeled separately from observation precision. Define
\begin{equation}
    \mathcal V_{c,t+1}
    =
    \bigl\{b\in\mathcal B_c :
    a^{\rm geo}_{c,b,t+1}=1,\;
    a^{\rm occ}_{c,b,t+1}=1,
    p^{\rm dir}_{c,b,t+1}>0\bigr\},
    \label{eq:radar_detection_set}
\end{equation}
where the indicators encode valid geometry and absence of occlusion, and $p^{\rm dir}_{c,b,t+1}$ is induced by $\mathbf R_{c,t}^{\rm tx}$. The set $\mathcal V_{c,t+1}$ identifies usable BSs, whereas $\mathbf R^{\rm rep}_{c,b,t+1}$ quantifies their observation quality. For a usable BS, reduced directional power is represented by increased measurement uncertainty through~\eqref{eq:snr_observation_covariance}. A BS outside the set contributes no observation, and the tracker coasts when the set is empty.

The receiver evaluates
\begin{equation}
    \mathbf R_{c,b,t+1}^{\rm rep}
    =
    \mathcal R_{c,b}^{\rm rep}
    \left(
    \hat{\mathbf x}_{c,t+1|t},
    \mathbf R_{c,t}^{\rm tx},
    \boldsymbol\vartheta_{c,b},
    \mathcal V_{c,t+1}
    \right).
    \label{eq:radar_observation_covariance}
\end{equation}
The mapping uses the predicted target geometry, the applied transmit covariance, and the declared availability state. A diagonal instance based on the predicted signal-to-noise ratio (SNR) is
\begin{equation}
    \mathbf R_{c,b,t+1}^{\rm rep}
    =
    \frac{
    \operatorname{diag}(\alpha_r,\alpha_\varphi,\alpha_{\dot r})
    }
    {\operatorname{SNR}_{c,b,t+1}
    (\hat{\mathbf x}_{c,t+1|t},\mathbf R_{c,t}^{\rm tx})},
    \qquad
    b\in\mathcal V_{c,t+1}.
    \label{eq:snr_observation_covariance}
\end{equation}
This inverse-SNR scaling follows the standard dependence of radar parameter-estimation variance on received SNR~\cite{liu2020jointRadarComm,liu2022crbJrc}. The coefficients \(\alpha_r\), \(\alpha_\phi\), and \(\alpha_{\dot r}\) are the corresponding unit-SNR variance coefficients for range, azimuth, and radial velocity. In simulation, \(\mathbf R^{\rm true}_{c,b,t+1}\) is evaluated at the realized target geometry and is used to generate the measurement noise, whereas tracking and calibration use the receiver-reported covariance \(\mathbf R^{\rm rep}_{c,b,t+1}\). For the available BSs, the reported covariances are stacked into the block-diagonal matrix \(\mathbf R^{\rm rep}_{c,t+1}\).

\subsection{Local EKF Tracking}
\label{subsec:local_bayesian_tracking}

Given the fixed $\mathbf Q^\star$, each operator runs an extended Kalman filter (EKF) using the nonlinear radar observation model in Section~\ref{subsec:radar_observation_model} \cite{bar2001estimation}. From the posterior at slot $t$, the next-slot prior is
\begin{subequations}
\begin{align}
    \hat{\mathbf x}_{c,t+1|t}
    &=
    \mathbf F\hat{\mathbf x}_{c,t|t},
    \label{eq:local_predict_mean}\\
    \mathbf P_{c,t+1|t}
    &=
    \mathbf F\mathbf P_{c,t|t}\mathbf F^{\mathsf T}
    +\mathbf Q^\star.
    \label{eq:local_predict_cov}
\end{align}
\end{subequations}
For a fixed ordering of $\mathcal V_{c,t+1}$, define
\begin{subequations}
    \begin{align}
    \mathbf z_{c,t+1}
    &\triangleq\operatorname{col}_{b\in\mathcal V_{c,t+1}}\mathbf z_{c,b,t+1},\\
    \mathbf g_{c,t+1}(\mathbf x)
    &\triangleq\operatorname{col}_{b\in\mathcal V_{c,t+1}}\mathbf g_{c,b}(\mathbf x),
    \end{align}
\end{subequations}
both in \(\mathbb R^{3|\mathcal V_{c,t+1}|}\). Their Jacobians are stacked into $\mathbf H^{\rm tr}_{c,t+1} \in \mathbb R^{3|\mathcal V_{c,t+1}|\times4},$ while $\mathbf R^{\rm rep}_{c,t+1} \in \mathbb R^{ 3|\mathcal V_{c,t+1}| \times 3|\mathcal V_{c,t+1}| }$ contains the corresponding reported measurement covariances.
\begin{subequations}
\begin{align}
    \boldsymbol\nu_{c,t+1}
    &=\mathbf z_{c,t+1}-\mathbf g_{c,t+1}(\hat{\mathbf x}_{c,t+1|t}),
    \label{eq:stacked_operator_observation}\\
    \boldsymbol\Omega_{c,t+1}
    &=\mathbf H_{c,t+1}^{\rm tr}\mathbf P_{c,t+1|t}
    (\mathbf H_{c,t+1}^{\rm tr})^{\mathsf T}
    +\mathbf R_{c,t+1}^{\rm rep}.
    \label{eq:local_innovation_covariance}
\end{align}
\end{subequations}
Azimuth innovations are wrapped to the principal interval before the EKF update. The stacked innovation is accepted when
\begin{equation}
    \boldsymbol\nu_{c,t+1}^{T} \boldsymbol\Omega_{c,t+1}^{-1} \boldsymbol\nu_{c,t+1} \le \chi^2_{d,\eta}, \qquad d=3|\mathcal V_{c,t+1}|,
\end{equation}
where \(\eta\) is the prescribed gating quantile. The posterior covariance is evaluated in Joseph form. If no usable measurement is available or the innovation is rejected, the predicted state and covariance are carried forward as the posterior.

\subsection{Downlink Transmission Model}
\label{subsec:downlink_covariance_model}

The total transmit covariance introduced in Section~\ref{subsec:network_model} is parameterized as
\begin{equation}
    \mathbf R_{c,t}^{\rm tx}
    =
    \sum_{k\in\mathcal K_c}\mathbf W_{c,k,t}
    +
    \mathbf S_{c,t},
    \label{eq:total_transmit_covariance}
\end{equation}
where \(\mathbf W_{c,k,t}\succeq0\) is the covariance assigned to communication user \(k\), and \(\mathbf S_{c,t}\succeq0\) represents the remaining transmit covariance not assigned to a communication-user stream. Its spatial structure is optimized jointly with the user covariances. \(\mathbf R^{\rm tx}_{c,t}\) determines both the S\&C power allocations.

The communication receiver treats the component represented by \(\mathbf S_{c,t}\) as uncancelled interference. The interference-plus-noise power of user \(k\) is
\begin{equation}
    I_{c,k,t}
    =
    \sum_{\ell\in\mathcal K_c\setminus\{k\}}
    \mathbf h_{c,k,t}^{\mathsf H}
    \mathbf W_{c,\ell,t}
    \mathbf h_{c,k,t}
    +
    \mathbf h_{c,k,t}^{\mathsf H}
    \mathbf S_{c,t}
    \mathbf h_{c,k,t}
    +
    \sigma_{c,k}^2.
    \label{eq:interference_noise_term}
\end{equation}
The downlink SINR and communication QoS constraint are
\begin{equation}
    \operatorname{SINR}_{c,k,t}
    =
    \frac{\mathbf h_{c,k,t}^{\mathsf H}\mathbf W_{c,k,t}\mathbf h_{c,k,t}}
    {I_{c,k,t}}
    \geq\gamma_{c,k},
    \qquad k\in\mathcal K_c.
    \label{eq:hard_sinr_constraint}
\end{equation}
Let $\mathbf E_{c,b}\in\{0,1\}^{N_t\times M_c}$ denote the real binary selection matrix that extracts the antenna block of BS \(b\). Its transmit-power constraint is
\begin{equation}
    \operatorname{tr}
    \left(
    \mathbf E_{c,b}\mathbf R_{c,t}^{\rm tx}\mathbf E_{c,b}^{\mathsf H}
    \right)
    \leq
    P_{c,b}^{\max},
    \qquad b\in\mathcal B_c.
    \label{eq:per_bs_covariance_power}
\end{equation}
Here, \(P^{\max}_{c,b}\) denotes the transmit-power budget of BS \(b\).

\section{Federated Calibration and Track Fusion}
\label{sec:federated_tracking}

\subsection{Process-Covariance Calibration from Radar Innovations}
\label{subsec:federated_motion_learning}

The slow-time stage estimates the fixed process-covariance level from radar innovations recorded by each operator, following residual-based covariance-identification methods \cite{mehra1970identification,odelson2006autocovariance}. Since the experiments use the isotropic model, we parameterize the acceleration variance by a scalar $\theta\in\mathbb R$:
\begin{equation}
    q(\theta)
    =
    q_{\min}+\exp(\theta),
    \qquad
    \mathbf Q(\theta)
    =
    \mathbf G
    \bigl(q(\theta)\mathbf I_2\bigr)
    \mathbf G^{\mathsf T}.
\end{equation}
This parameterization enforces \(q(\theta)>q_{\min}\) while allowing local optimization and aggregation to be performed over the unconstrained parameter \(\theta\).
\footnote{The implementation also supports axis-specific calibration by using \(\boldsymbol\theta=(\theta_x,\theta_y)\) and replacing \(q(\theta)\mathbf I_2\) with \(\operatorname{diag}(q_{\min}+e^{\theta_x},q_{\min}+e^{\theta_y})\).}

Let \(\lambda\ge0\) denote the proximal regularization coefficient, which limits the departure of each local fit from the parameter broadcast at the current aggregation round. Let \(\mathcal T_c^{\rm cal}\) denote the calibration slots scored by operator \(c\). For a candidate \(\theta\), the operator replays the EKF in Section\ref{subsec:local_bayesian_tracking} with \(\mathbf Q^\star\) replaced by \(\mathbf Q(\theta)\), yielding the innovation \(\boldsymbol\nu_{c,t}(\theta)\) and covariance \(\boldsymbol\Omega_{c,t}(\theta)\). Receiver noises are assumed conditionally independent across BSs, so the recorded reported covariances are stacked block diagonally over the valid observations. Calibration uses the same receiver-reported covariance as the tracker.

At aggregation round $\ell$, operator $c$ minimizes
\begin{equation}
\begin{aligned}
    \mathcal L_c^{\rm cal}
    (\theta;\theta^{(\ell)})
    ={}&
    \frac{1}{2N_c}
    \sum_{t\in\mathcal T_c^{\rm cal}}
    \left[
    \log\left|\boldsymbol\Omega_{c,t}\right|
    +
    \boldsymbol\nu_{c,t}^{\mathsf T}
    \boldsymbol\Omega_{c,t}^{-1}
    \boldsymbol\nu_{c,t}
    \right]\\
    &+
    \lambda
    \left(\theta-\theta^{(\ell)}\right)^2.
\end{aligned}
\label{eq:local_innovation_nll}
\end{equation}
Here, $N_c
    =
    \sum_{t\in\mathcal T_c^{\rm cal}}
    |\mathcal V_{c,t}|$
is the number of valid BS--time tuples contributing to the scored innovations.\footnote{The implementation accumulates the corresponding counter over a common number \(E\) of local optimization passes, yielding \(\widetilde N_c=EN_c\). Since the same \(E\) is used by all operators, this factor cancels in the normalized aggregation weights.} The first usable observation of each calibration trajectory is used for initialization and is excluded from the score. Normalization by \(N_c\) expresses the data-fit term per valid BS--time tuple. Terms independent of \(\theta\), including the Gaussian normalization constant, are omitted from~\eqref{eq:local_innovation_nll}.

Starting from the broadcast $\theta^{(\ell)}$, each operator performs a fixed number of local optimization steps and returns $\theta_c^{(\ell+1)}$ together with its aggregation count. Let
\begin{equation}
    \mathcal C_+=\{c\in\mathcal C:N_c>0\},
    \qquad \sum_{c\in\mathcal C_+}N_c>0.
\end{equation}

The server applies count-weighted aggregation,
\begin{equation}
    \theta^{(\ell+1)}
    =
    \sum_{c\in\mathcal C_+}
    \alpha_c\theta_c^{(\ell+1)},
    \qquad
    \alpha_c
    =
    \frac{N_c}{\sum_{j\in\mathcal C_+}N_j}.
    \label{eq:federated_motion_aggregation}
\end{equation}
Aggregation in the log-excess-variance domain corresponds to a weighted geometric average of \(q-q_{\min}\). Equation~\eqref{eq:federated_motion_aggregation} follows the data-size-weighted aggregation principle of FedAvg \cite{mcmahan2017fedAvg}, applied here to the low-dimensional calibration parameter. The final aggregate \(\theta^{\rm agg}\) defines \(\mathbf Q^\star=\mathbf Q(\theta^{\rm agg})\), which is fixed during online tracking.

\subsection{Weighted CI of Track Summaries}
\label{subsec:track_fusion}

\subsubsection{Time Alignment and Fusion Rule}
\label{subsec:fusion_message}

Let $\mathcal A_t\subseteq\mathcal C$ denote the active operators with valid, time-aligned summaries for the registered target at slot $t$. For \(c\in\mathcal A_t\), \(\mathbf P_{c,t|t}\succ0\), and define
\begin{equation}
    \boldsymbol\Lambda_{c,t}
    =
    \mathbf P_{c,t|t}^{-1},
    \qquad
    \boldsymbol\eta_{c,t}
    =
    \boldsymbol\Lambda_{c,t}\hat{\mathbf x}_{c,t|t}.
    \label{eq:information_form_track}
\end{equation}
For weights in the closed simplex
\begin{equation}
    \mathcal W_t
    =
    \left\{
    \boldsymbol\omega:
    \omega_c\geq0,\;
    \sum_{c\in\mathcal A_t}\omega_c=1
    \right\},
    \label{eq:fusion_simplex}
\end{equation}
weighted CI forms \cite{julier1997nonDivergent,chen2002ciRevisited}
\begin{equation}
\label{eq:CI_forms}
\begin{split}
    \boldsymbol\Lambda_{f,t}(\boldsymbol\omega)
    &=
    \sum_{c\in\mathcal A_t}\omega_c\boldsymbol\Lambda_{c,t},\quad
    \boldsymbol\eta_{f,t}(\boldsymbol\omega)
    =
    \sum_{c\in\mathcal A_t}\omega_c\boldsymbol\eta_{c,t},\\
    \mathbf P_{f,t|t}(\boldsymbol\omega)
    &=
    [\boldsymbol\Lambda_{f,t}(\boldsymbol\omega)]^{-1},\quad
    \hat{\mathbf x}_{f,t|t}(\boldsymbol\omega)
    =
    \mathbf P_{f,t|t}(\boldsymbol\omega)
    \boldsymbol\eta_{f,t}(\boldsymbol\omega).
\end{split}
\end{equation}
When the contributing local covariances are consistent, CI preserves covariance consistency without requiring their error cross-correlations to be known~\cite{julier1997nonDivergent,chen2002ciRevisited}.
\subsubsection{Trace-Based CI Weight Selection}
\label{subsec:ci_weight_selection}

The CI weights are selected by minimizing the trace of the fused state covariance, which provides a scalar measure of the aggregate marginal uncertainty:
\begin{equation}
    \hat{\boldsymbol\omega}_t
    \in
    \underset{\substack{\boldsymbol\omega\in\mathcal W_t}}
    {\operatorname{arg\,min}}
    \operatorname{tr}
    \left[
    \left(
    \sum_{c\in\mathcal A_t}
    \omega_c\mathbf P_{c,t|t}^{-1}
    \right)^{-1}
    \right].
    \label{eq:grid_ci_weight}
\end{equation}
For the two-operator setting considered in the experiments, the simplex reduces to \(\omega_1\in[0,1]\), with \(\omega_2=1-\omega_1\), and~\eqref{eq:grid_ci_weight} is evaluated on a uniform 101-point grid. For more than two active operators, the weight search extends to the corresponding multidimensional simplex.

The selected weights are substituted into~\eqref{eq:CI_forms}, giving
\begin{equation}
    (\hat{\mathbf x}_{f,t|t},\mathbf P_{f,t|t})
    =
    \left(
    \hat{\mathbf x}_{f,t|t}(\hat{\boldsymbol\omega}_t),
    \mathbf P_{f,t|t}(\hat{\boldsymbol\omega}_t)
    \right).
    \label{eq:final_fused_track}
\end{equation}

\begin{lemma}[CI-induced covariance bound]
\label{lem:ci_prior_comparability}
For any $\boldsymbol\omega\in\mathcal W_t$ and a reference operator $c\in\mathcal A_t$, define
\begin{subequations}
\begin{align}
    \kappa_{j|c,t}
    &=\lambda_{\max}\!\left(
    \boldsymbol\Lambda_{c,t}^{-1/2}
    \boldsymbol\Lambda_{j,t}
    \boldsymbol\Lambda_{c,t}^{-1/2}
    \right),\\
    \rho_{c,t}
    &=\max\!\left\{1,\sum_{j\in\mathcal A_t}
    \omega_j\kappa_{j|c,t}\right\}.
\end{align}
\label{eq:ci_prior_comparability_factor}
\end{subequations}
Then
\begin{equation}
    \mathbf P_{c,t|t}
    \preceq
    \rho_{c,t}\mathbf P_{f,t|t}(\boldsymbol\omega).
    \label{eq:ci_posterior_comparability}
\end{equation}
When the local and fused summaries are propagated using the same \(\mathbf F\) and \(\mathbf Q^\star\), the same factor also bounds their one-step predictive covariances and the corresponding position blocks.
\end{lemma}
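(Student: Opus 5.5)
The plan is to work in information form. By definition of $\kappa_{j|c,t}$ as the largest generalized eigenvalue of $\boldsymbol\Lambda_{j,t}$ relative to $\boldsymbol\Lambda_{c,t}$, each operator satisfies $\boldsymbol\Lambda_{j,t}\preceq\kappa_{j|c,t}\boldsymbol\Lambda_{c,t}$. To obtain this, I would congruence-transform by $\boldsymbol\Lambda_{c,t}^{-1/2}$. The matrix $\boldsymbol\Lambda_{c,t}^{-1/2}\boldsymbol\Lambda_{j,t}\boldsymbol\Lambda_{c,t}^{-1/2}$ is Hermitian positive definite, so it is bounded above by its largest eigenvalue times the identity. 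Multiplying back on both sides by $\boldsymbol\Lambda_{c,t}^{1/2}$ preserves the Loewner order and gives the claim.

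Summing these bounds with the nonnegative weights $\omega_j$ yields
\begin{equation*}
\boldsymbol\Lambda_{f,t}(\boldsymbol\omega)=\sum_{j}\omega_j\boldsymbol\Lambda_{j,t}\preceq\Big(\sum_j\omega_j\kappa_{j|c,t}\Big)\boldsymbol\Lambda_{c,t}\preceq\rho_{c,t}\boldsymbol\Lambda_{c,t}.
\end{equation*}
The final step uses $\rho_{c,t}\ge\sum_j\omega_j\kappa_{j|c,t}$ together with $\boldsymbol\Lambda_{c,t}\succ0$. The $\max\{1,\cdot\}$ is not needed here, but it guarantees $\rho_{c,t}\ge 1$, which I use later. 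The left side is positive definite: every $\boldsymbol\Lambda_{j,t}\succ0$, the weights are nonnegative, and at least one weight is positive because they sum to one. Since matrix inversion is order-reversing on positive definite matrices, we get $\mathbf P_{f,t|t}(\boldsymbol\omega)\succeq\rho_{c,t}^{-1}\mathbf P_{c,t|t}$, which is \eqref{eq:ci_posterior_comparability}.

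For the predictive statement, both summaries are propagated as in \eqref{eq:local_predict_cov}. The local prediction is $\mathbf P_{c,t+1|t}=\mathbf F\mathbf P_{c,t|t}\mathbf F^{\mathsf T}+\mathbf Q^\star$, and the fused prediction is $\mathbf P_{f,t+1|t}=\mathbf F\mathbf P_{f,t|t}\mathbf F^{\mathsf T}+\mathbf Q^\star$. The congruence $\mathbf X\mapsto\mathbf F\mathbf X\mathbf F^{\mathsf T}$ preserves $\preceq$, so
\begin{equation*}
\mathbf F\mathbf P_{c,t|t}\mathbf F^{\mathsf T}\preceq\rho_{c,t}\,\mathbf F\mathbf P_{f,t|t}\mathbf F^{\mathsf T}.
\end{equation*}
The process-noise term is where $\rho_{c,t}\ge1$ is needed. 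Because $\mathbf Q^\star\succeq0$, we have $\mathbf Q^\star\preceq\rho_{c,t}\mathbf Q^\star$. Adding the two inequalities gives $\mathbf P_{c,t+1|t}\preceq\rho_{c,t}\mathbf P_{f,t+1|t}$. For the position blocks, I would apply the congruence by the selector $[\mathbf I_2,\mathbf 0]$, which extracts the principal $2\times2$ submatrices and again preserves the order.

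I do not expect a real obstacle. The only points needing care are positive definiteness of $\boldsymbol\Lambda_{f,t}(\boldsymbol\omega)$ for weights on the boundary of the closed simplex, and the role of the $\max$ with $1$ in absorbing $\mathbf Q^\star$ during prediction.
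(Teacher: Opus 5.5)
Your proof is correct and follows the paper's argument essentially step for step. You use the eigenvalue bound $\boldsymbol\Lambda_{j,t}\preceq\kappa_{j|c,t}\boldsymbol\Lambda_{c,t}$, sum it with the weights, invert using order reversal, propagate through the common prediction map with $\rho_{c,t}\ge1$ absorbing $\mathbf Q^\star$, and finish by congruence with $\mathbf B_{\rm pos}$. Your explicit checks, namely the congruence derivation of the eigenvalue bound and the positive definiteness of $\boldsymbol\Lambda_{f,t}(\boldsymbol\omega)$ at boundary points of the simplex, fill in details the paper leaves implicit.
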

The proof is given in Appendix~\ref{app:proof_ci_prior_comparability}.

\subsection{One-Step Prediction and Design Points}
\label{subsec:predictive_interface}

The fused posterior is propagated once to the slot for which transmission is designed:
\begin{subequations}
\label{eq:fused_prediction}
\begin{align}
    \hat{\mathbf x}_{f,t+1|t}
    &=
    \mathbf F\hat{\mathbf x}_{f,t|t},
    \label{eq:fused_prediction_mean}\\
    \mathbf P_{f,t+1|t}
    &=
    \mathbf F\mathbf P_{f,t|t}\mathbf F^{\mathsf T}
    +
    \mathbf Q^\star.
    \label{eq:fused_prediction_cov}
\end{align}
\end{subequations}
Let \(\hat{\mathbf q}_{f,t+1|t}\) and \(\mathbf P^{\rm pos}_{f,t+1|t}\) denote the horizontal position mean and covariance of the prediction in~\eqref{eq:fused_prediction}. The predictive mean specifies the nominal sensing geometry, while the predictive mean and covariance are used to construct the sensing design points below.

To keep the number of sensing constraints fixed and small while accounting for the two-dimensional position uncertainty, we use a symmetric five-point construction consisting of the predictive mean and two pairs of covariance-dependent perturbations. Let \(\mathbf L_{f,t+1|t}\) be the lower-triangular Cholesky factor satisfying $\mathbf P^{\rm pos}_{f,t+1|t}=\mathbf L_{f,t+1|t}\mathbf L_{f,t+1|t}^{\mathsf T}$. The design points are
\begin{equation}
    \mathbf q_{t+1}^{(i)}
    =
    \hat{\mathbf q}_{f,t+1|t}
    +
    r_{\rm d}\,
    \mathbf L_{f,t+1|t}\boldsymbol\xi^{(i)},
    \quad
    \boldsymbol\xi^{(i)}
    \in
    \{\mathbf0,\,\pm\mathbf e_1,\,\pm\mathbf e_2\}.
    \label{eq:sigma_design_points}
\end{equation}
The vectors \(\mathbf e_1\) and \(\mathbf e_2\) are the canonical basis vectors of \(\mathbb R^2\). The central point evaluates the nominal predicted position, while the four displaced points account for uncertainty along the two directions defined by the Cholesky factor. Each displaced point has squared Mahalanobis distance \(r_d^2\) from the predictive mean and therefore lies on the same covariance contour. The scalar \(r_d\) controls the spread of these points and is treated as a design parameter; the experiments use \(r_d=\sqrt2\). The lower-triangular Cholesky factor fixes the orientation of the construction in the adopted horizontal coordinate frame.

\section{Uncertainty-Aware Transmission Design}
\label{sec:qos_isac_bf}

This section uses the fused predictive state and covariance to formulate the transmit-covariance control problem solved by each operator. The shared prediction is combined with its own CSI, QoS requirements, and power constraints.

\subsection{Affine Position-Information Model}
\label{subsec:sensing_information_model}

Let $\mathbf P_{f,t+1|t}^{\rm pos}\succ\mathbf0$ denote the horizontal position block of the fused prediction, and let \(\{\mathbf q^{(i)}_{t+1}\}_{i=1}^{N_s}\) be the design points defined in~\eqref{eq:sigma_design_points}.

For any horizontal position \(\mathbf q\), let $\mathbf a_{c,b}(\mathbf q)\in \mathbb{C}^{N_t}$, $\|\mathbf a_{c,b}(\mathbf q)\|_2=1$, denote the normalized ULA steering vector of BS \(b\) toward \(\mathbf q\). Define
$\mathbf D_{c,b}(\mathbf q)
=
\mathbf E_{c,b}^{\mathsf H}
\mathbf a_{c,b}(\mathbf q)\mathbf a_{c,b}^{\mathsf H}(\mathbf q)
\mathbf E_{c,b}$.
The directional power delivered by BS $b$ is then
\begin{equation}
    p_{c,b}(\mathbf q;\mathbf R_{c,t}^{\rm tx})
    =
    \operatorname{tr}\!\left(
    \mathbf D_{c,b}(\mathbf q)\mathbf R_{c,t}^{\rm tx}
    \right).
    \label{eq:directional_power_map}
\end{equation}
The fixed-altitude range--azimuth Jacobian is
\begin{equation}
    \mathbf H_{c,b}(\mathbf q)
    =
    \frac{\partial
    [r_{c,b}(\mathbf q),\varphi_{c,b}(\mathbf q)]^{\mathsf T}}
    {\partial\mathbf q^{\mathsf T}}.
    \label{eq:position_observation_jacobian}
\end{equation}
Consistent with the inverse-SNR covariance model in~\eqref{eq:snr_observation_covariance}, the range–azimuth covariance at a fixed geometry scales inversely with directional power. Let $\mathbf R_{c,b}^{\rm ref}(\mathbf q)$denote the declared range–azimuth covariance at a reference directional power \(P^{\rm ref}_{c,b}\). For directional power \(p>0\), $\mathbf R_{c,b}(\mathbf q,p) = \frac{P^{\rm ref}_{c,b}}{p} \mathbf R^{\rm ref}_{c,b}(\mathbf q)$.
The corresponding per-unit-power position information is therefore
\begin{equation}
    \bar{\mathbf J}_{c,b}(\mathbf q)
    =
    \frac{1}{P_{c,b}^{\rm ref}}
    \mathbf H_{c,b}^{\mathsf T}(\mathbf q)
    [\mathbf R_{c,b}^{\rm ref}(\mathbf q)]^{-1}
    \mathbf H_{c,b}(\mathbf q)
    \succeq\mathbf0.
    \label{eq:unit_position_information}
\end{equation}
Summing over the BSs predicted to have line of sight to $\mathbf q$, the design position information is
\begin{equation}
    \mathbf J_c^{\rm des}(\mathbf q;\mathbf R_{c,t}^{\rm tx})
    =
    \sum_{b}
    p_{c,b}(\mathbf q;\mathbf R_{c,t}^{\rm tx})
    \bar{\mathbf J}_{c,b}(\mathbf q),
    \label{eq:sensing_fim}
\end{equation}
where the summation is over the BSs predicted to have line of sight to \(\mathbf q\). For fixed design points, predicted availability, and reference parameters, \(\mathbf J^{\rm des}_{c}\) is a positive-semidefinite (PSD) affine function of \(\mathbf R^{\rm tx}_{c,t}\).

At design point $i$, define
\begin{equation}
    \boldsymbol\Sigma_{c,t,i}(\mathbf R_{c,t}^{\rm tx})
    =
    \left[
    (\mathbf P_{f,t+1|t}^{\rm pos})^{-1}
    +
    \mathbf J_c^{\rm des}
    (\mathbf q_{t+1}^{(i)};\mathbf R_{c,t}^{\rm tx})
    \right]^{-1}.
    \label{eq:posterior_position_cov_after_sensing}
\end{equation}
The transmission design imposes
\begin{equation}
    \operatorname{tr}\!\left(
    \boldsymbol\Sigma_{c,t,i}(\mathbf R_{c,t}^{\rm tx})
    \right)
    \leq
    \epsilon,
    \qquad i=1,\ldots,N_s.
    \label{eq:tracking_information_constraint}
\end{equation}
The trace measures the total horizontal position variance, so \(\epsilon\) limits the aggregate predicted position uncertainty at each design point. The predictive covariance enters both the prior-information term in~\eqref{eq:posterior_position_cov_after_sensing} and the construction of the design points in~\eqref{eq:sigma_design_points}.

\subsection{Problem Formulation and QoS Constraints}
\label{subsec:main_bf_problem}

For operator \(c\) at slot \(t\), use the covariance decomposition in~\eqref{eq:total_transmit_covariance}. The transmission design is
\begin{subequations}
\label{prob:qos_isac_covariance}
\begin{align}
    \underset{\{\mathbf W_{c,k,t}\},\,\mathbf S_{c,t}}
    {\min}
    \quad
    &
    \operatorname{tr}(\mathbf R_{c,t}^{\rm tx})
    \label{prob:bf_objective}\\
    \text{s.t.}
    \quad
    &
    \mathbf W_{c,k,t}\succeq\mathbf0,
    \quad
    \mathbf S_{c,t}\succeq\mathbf0,
    \quad k\in\mathcal K_c,
    \label{prob:psd_constraints}\\
    &
    \mathbf h_{c,k,t}^{\mathsf H}
    \mathbf W_{c,k,t}
    \mathbf h_{c,k,t}
    \geq
    \gamma_{c,k} I_{c,k,t},
    \quad k\in\mathcal K_c,
    \label{prob:hard_qos_constraints}\\
    &
    \operatorname{tr}
    \left(
    \mathbf E_{c,b}\mathbf R_{c,t}^{\rm tx}\mathbf E_{c,b}^{\mathsf H}
    \right)
    \leq
    P_{c,b}^{\max},
    \quad b\in\mathcal B_c,
    \label{prob:per_bs_constraints}\\
    &
    \operatorname{tr}
    \left(
    \boldsymbol\Sigma_{c,t,i}(\mathbf R_{c,t}^{\rm tx})
    \right)
    \leq
    \epsilon,
    \quad i=1,\ldots,N_s.
    \label{prob:tracking_constraints}
\end{align}
\end{subequations}
The SINR constraints become affine after rearrangement, while the per-BS constraints impose finite transmit-power budgets. Depending on the QoS targets, tracking threshold, and available power,~\eqref{prob:qos_isac_covariance} may be infeasible; this case is handled explicitly in Algorithm~\ref{alg:closed_loop_isac_tracking}.

\subsection{Semidefinite Reformulation and Rank-One Recovery}
\label{subsec:sdp_interpretation}

Introduce a real symmetric auxiliary matrix \(\mathbf U_{c,t,i}\in\mathbb R^{2\times2}\) for each design point. The tracking constraint in~\eqref{eq:tracking_information_constraint} can be written as
\begin{subequations}
\begin{align}
    &
    \begin{bmatrix}
        [\boldsymbol\Sigma_{c,t,i}
        (\mathbf R_{c,t}^{\rm tx})]^{-1}
        & \mathbf I_2\\
        \mathbf I_2 & \mathbf U_{c,t,i}
    \end{bmatrix}
    \succeq\mathbf0,
    \label{eq:block_SD}\\
    &
    \operatorname{tr}(\mathbf U_{c,t,i})
    \leq
    \epsilon.
\end{align}
    \label{eq:schur_tracking_constraint}
\end{subequations}
Since \(\mathbf P^{\rm pos}_{f,t+1|t}\succ0\) and
\(\mathbf J^{\rm des}_{c}(\mathbf q^{(i)}_{t+1}; \mathbf R^{\rm tx}_{c,t})\succeq0\), the upper-left block in~\eqref{eq:block_SD} is positive definite. By the Schur complement,~\eqref{eq:block_SD} is equivalent to
\begin{equation}
    \mathbf U_{c,t,i} \succeq \boldsymbol\Sigma_{c,t,i}(\mathbf R^{\rm tx}_{c,t}).
\end{equation}
Hence,~\eqref{eq:schur_tracking_constraint} are equivalent to the trace constraint in~\eqref{eq:tracking_information_constraint}: if~\eqref{eq:tracking_information_constraint} holds, one may choose \(\mathbf U_{c,t,i}=\boldsymbol\Sigma_{c,t,i}\), while any feasible \(\mathbf U_{c,t,i}\) in~\eqref{eq:schur_tracking_constraint} implies
\(\operatorname{tr}(\boldsymbol\Sigma_{c,t,i})\le\epsilon\).

Because the upper-left block is affine in \(\mathbf R^{\rm tx}_{c,t}\), each design point contributes a \(4\times4\) linear matrix inequality (LMI). Together with the PSD, SINR, and per-BS power constraints, problem~\eqref{prob:qos_isac_covariance} is therefore an SDP in the covariance variables~\cite{boyd2004convex}.

General rank-reduction results for beamforming SDPs are available~\cite{huang2010rankConstrained}. For the present formulation, Proposition~\ref{prop:rank_one_recovery} gives a constructive recovery by transferring excess rank in a user covariance to \(\mathbf S_{c,t}\) while preserving the total transmit covariance and desired received power.

\begin{proposition}[Exact rank-one recovery of communication covariances]
\label{prop:rank_one_recovery}
Consider a feasible solution $\{\mathbf W_{c,k,t}\}_{k\in\mathcal K_c},
\mathbf S_{c,t}$ of \eqref{prob:qos_isac_covariance}. Assume
$\gamma_{c,k}>0$ and $\sigma_{c,k}^2>0$ for every communication user. We further assume that, apart from the individual PSD constraints and user SINR constraints, the objective and all remaining constraints depend on
\(\{\mathbf W_{c,k,t}\},\mathbf S_{c,t}\)
only through the total covariance
\(\mathbf R^{\rm tx}_{c,t}\). 

For each user define
\begin{equation}
\begin{aligned}
    d_{c,k,t}
    =\mathbf h_{c,k,t}^{\mathsf H}\mathbf W_{c,k,t}\mathbf h_{c,k,t},\qquad \qquad
    \\
    \bar{\mathbf w}_{c,k,t}
    =\frac{\mathbf W_{c,k,t}\mathbf h_{c,k,t}}
    {\sqrt{d_{c,k,t}}},\qquad
    \bar{\mathbf W}_{c,k,t}
    =\bar{\mathbf w}_{c,k,t}\bar{\mathbf w}_{c,k,t}^{\mathsf H},
\end{aligned}
    \label{eq:rank_one_recovery_definition}
\end{equation}
and 
\begin{equation}
    \boldsymbol\Delta_{c,k,t}=\mathbf W_{c,k,t}-\bar{\mathbf W}_{c,k,t}
    \qquad
    \bar{\mathbf S}_{c,t}
    =\mathbf S_{c,t}+\sum_{k\in\mathcal K_c}\boldsymbol\Delta_{c,k,t}.
    \label{eq:rank_one_residual_definition}
\end{equation}
Then $d_{c,k,t}>0$, $\boldsymbol\Delta_{c,k,t}\succeq\mathbf0$,
$\bar{\mathbf S}_{c,t}\succeq\mathbf0$, and
$\operatorname{rank}(\bar{\mathbf W}_{c,k,t})=1$. Moreover,
\begin{equation}
    \sum_k\bar{\mathbf W}_{c,k,t}+\bar{\mathbf S}_{c,t}
    =\sum_k\mathbf W_{c,k,t}+\mathbf S_{c,t}
    =\mathbf R_{c,t}^{\rm tx}.
    \label{eq:rank_one_total_covariance}
\end{equation}
The transformed solution preserves the objective, all user SINRs, the per-BS power constraints, and the tracking-information constraints in~\eqref{prob:qos_isac_covariance}. Equations~\eqref{eq:rank_one_recovery_definition}--\eqref{eq:rank_one_residual_definition} therefore provide an exact rank-one recovery of the communication covariances.
\end{proposition}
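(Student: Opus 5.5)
The plan is to verify the claims in the order they are stated, since each one relies on the previous.

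\emph{Positivity of $d_{c,k,t}$.} This follows from the SINR constraint~\eqref{prob:hard_qos_constraints}. The interference-plus-noise term satisfies $I_{c,k,t}\ge\sigma_{c,k}^2>0$ because $\mathbf W_{c,\ell,t}\succeq\mathbf0$ and $\mathbf S_{c,t}\succeq\mathbf0$. Hence $d_{c,k,t}\ge\gamma_{c,k}I_{c,k,t}>0$, so $\bar{\mathbf w}_{c,k,t}$ is well defined.

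\emph{Rank of $\bar{\mathbf W}_{c,k,t}$.} We have $\mathbf h_{c,k,t}^{\mathsf H}\bar{\mathbf w}_{c,k,t}=d_{c,k,t}/\sqrt{d_{c,k,t}}=\sqrt{d_{c,k,t}}\neq0$. Thus $\bar{\mathbf w}_{c,k,t}\neq\mathbf0$, and $\bar{\mathbf W}_{c,k,t}$ has rank exactly one.

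\emph{$\boldsymbol\Delta_{c,k,t}\succeq\mathbf0$: the main step.} I would take an arbitrary vector $\mathbf v$, drop the indices, and show $\mathbf v^{\mathsf H}\mathbf W\mathbf v\ge|\mathbf v^{\mathsf H}\mathbf W\mathbf h|^2/(\mathbf h^{\mathsf H}\mathbf W\mathbf h)$. This is Cauchy--Schwarz for the semi-inner product $\langle\mathbf x,\mathbf y\rangle=\mathbf x^{\mathsf H}\mathbf W\mathbf y$, valid since $\mathbf W\succeq\mathbf0$.

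An alternative route uses the Schur complement. The matrix $\begin{bmatrix}\mathbf W & \mathbf W\mathbf h\\ \mathbf h^{\mathsf H}\mathbf W & \mathbf h^{\mathsf H}\mathbf W\mathbf h\end{bmatrix}=[\mathbf I,\mathbf h]^{\mathsf H}\mathbf W[\mathbf I,\mathbf h]\succeq\mathbf0$. Its scalar block is $d>0$, so the Schur complement $\mathbf W-\mathbf W\mathbf h\mathbf h^{\mathsf H}\mathbf W/d=\boldsymbol\Delta\succeq\mathbf0$.

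Given this, $\bar{\mathbf S}_{c,t}$ is a sum of PSD matrices and is therefore PSD. The identity~\eqref{eq:rank_one_total_covariance} is immediate by telescoping: $\sum_k\bar{\mathbf W}_{c,k,t}+\mathbf S_{c,t}+\sum_k(\mathbf W_{c,k,t}-\bar{\mathbf W}_{c,k,t})$.

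\emph{Preservation of the problem data.} By~\eqref{eq:rank_one_total_covariance}, the total covariance $\mathbf R^{\rm tx}_{c,t}$ is unchanged. By the standing assumption, the objective, the per-BS constraints~\eqref{prob:per_bs_constraints} and the tracking constraints~\eqref{prob:tracking_constraints} depend only on $\mathbf R^{\rm tx}_{c,t}$; they are explicitly functions of it through~\eqref{eq:directional_power_map}--\eqref{eq:posterior_position_cov_after_sensing}. So these are preserved verbatim.

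For the SINRs, the desired power is unchanged: $\mathbf h^{\mathsf H}\bar{\mathbf W}\mathbf h=|\mathbf h^{\mathsf H}\bar{\mathbf w}|^2=d$. Equivalently, $\mathbf h_{c,k,t}^{\mathsf H}\boldsymbol\Delta_{c,k,t}\mathbf h_{c,k,t}=0$.

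The key observation for the interference is that $\mathbf h_{c,k,t}^{\mathsf H}\mathbf R^{\rm tx}_{c,t}\mathbf h_{c,k,t}=d_{c,k,t}+I_{c,k,t}-\sigma_{c,k}^2$. Since $\mathbf R^{\rm tx}_{c,t}$ and $d_{c,k,t}$ are both preserved, the new interference-plus-noise $\bar I_{c,k,t}$ equals $I_{c,k,t}$. In words, the excess rank moved into $\bar{\mathbf S}_{c,t}$ exactly replaces the removed parts, as seen by user $k$.

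Therefore every SINR is unchanged, all constraints hold, and the objective value is identical. The transformed point is thus feasible, with the same objective, and has rank-one user covariances. The only nontrivial step is $\boldsymbol\Delta_{c,k,t}\succeq\mathbf0$, handled above by the Schur-complement factorization.
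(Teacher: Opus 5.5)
Your proof is correct and follows the same structure as the paper's: $d_{c,k,t}>0$ from the SINR constraint, then $\boldsymbol\Delta_{c,k,t}\succeq\mathbf0$, then invariance of $\mathbf R^{\rm tx}_{c,t}$, and finally invariance of the desired power and the interference. The differences are cosmetic. You get $\boldsymbol\Delta_{c,k,t}\succeq\mathbf0$ from Cauchy--Schwarz or a Schur complement, whereas the paper writes $\boldsymbol\Delta_{c,k,t}=\mathbf W_{c,k,t}^{1/2}(\mathbf I-\mathbf u\mathbf u^{\mathsf H}/\|\mathbf u\|_2^2)\mathbf W_{c,k,t}^{1/2}$ with $\mathbf u=\mathbf W_{c,k,t}^{1/2}\mathbf h_{c,k,t}$. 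You also read off $\bar I_{c,k,t}=I_{c,k,t}$ from the identity $\mathbf h_{c,k,t}^{\mathsf H}\mathbf R^{\rm tx}_{c,t}\mathbf h_{c,k,t}=d_{c,k,t}+I_{c,k,t}-\sigma_{c,k}^2$, where the paper expands the interference term directly.
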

The proof is given in Appendix~\ref{app:proof_rank_one_recovery}.

The recovery relies on \(\mathbf S_{c,t}\) being a free PSD variable and on the objective and the remaining non-SINR constraints depending on the covariance variables through their sum \(\mathbf R^{\rm tx}_{c,t}\). Since proposition~\ref{prop:rank_one_recovery} preserves this total covariance, its decomposition into the user covariances and \(\mathbf S_{c,t}\) is generally not unique. We therefore evaluate transmission performance using the total transmit power, constraint satisfaction, and the resulting tracking metrics.

After solving~\eqref{prob:qos_isac_covariance}, finite covariance solutions are numerically symmetrized and projected onto the PSD cone before application. Algorithm~\ref{alg:closed_loop_isac_tracking} summarizes the resulting closed-loop procedure and the associated numerical post-checks.
\begin{algorithm}[!t]
\caption{Closed-loop Tracking and Transmission Procedure}
\label{alg:closed_loop_isac_tracking}
\begin{algorithmic}[1]
\REQUIRE Fixed $\mathbf Q^\star$, shared acquisition posterior at $t=0$, local trackers, CI rule, and QoS constraints.
\STATE Without applying CI, initialize $\mathbf R_{c,0}^{\rm tx}$ by applying the same one-step predict--solve--project policy to the shared acquisition posterior.
\FOR{$t=1,\ldots,T$}
    \STATE Update local tracks; exchange synchronized summaries and apply CI.
    \STATE Predict the fused summary, form the five design points, and solve \eqref{prob:qos_isac_covariance}.
    \IF{no finite covariance solution is returned}
        \STATE Record infeasibility and coast the next radar update.
    \ELSE
        \STATE Symmetrize/project the covariance matrices and evaluate the SINR, per-BS power, and tracking post-checks.
        \STATE Record the numerical certification status.
        \STATE Apply the finite projected \(\mathbf R^{\rm tx}_{c,t}\); its directional power determines the next-slot observation quality.
    \ENDIF
\ENDFOR
\end{algorithmic}
\end{algorithm}

\subsection{Finite-Interval Covariance Analysis}
\label{subsec:covariance_containment}

The analysis proceeds in three steps. First, the transmission constraint bounds the position covariance predicted by the design model. Second, two covariance conditions transfer this bound to the posterior covariance of a local tracker. Third, the NCV recursion propagates the position bound to the velocity, full-state, and predictive covariances. Define
\begin{equation}
    \mathbf B_{\rm pos}=[\mathbf I_2\;\mathbf0_{2\times2}],
    \qquad
    \mathbf B_{\rm vel}=[\mathbf0_{2\times2}\;\mathbf I_2],
\end{equation}
which select the horizontal position and velocity blocks, respectively. Let $\mathcal I_T=\{t_0,\ldots,T\}$ be a consecutive interval over which the conditions below hold, and fix a reference operator whose track summary is available throughout the interval,
\begin{equation}
    c_0\in\bigcap_{t\in\mathcal I_T}\mathcal A_t.
\end{equation}
At slot \(t\), define the mean-point design information and the local position covariances as
\begin{subequations}
\begin{align}
    &\mathbf J_{c,t}^{\rm des}
    =
    \mathbf J_c^{\rm des}
    (\hat{\mathbf q}_{f,t|t-1};\mathbf R_{c,t-1}^{\rm tx}),
    \label{eq:mean_point_design_information}\\
    &\mathbf P_{c,t|s}^{\rm pos}
    =
    \mathbf B_{\rm pos}\mathbf P_{c,t|s}\mathbf B_{\rm pos}^{\mathsf T},
    \quad s\in\{t-1,t\}.
    \label{eq:local_position_blocks}
\end{align}
\end{subequations}
For positive-definite prior and posterior position covariances satisfying $\mathbf 0 \prec \mathbf P^{\rm pos}_{c,t|t} \preceq \mathbf P^{\rm pos}_{c,t|t-1}$, define the effective receiver-side position information as
\begin{equation}
    \mathbf J_{c,t}^{\rm rx}
    =
    (\mathbf P_{c,t|t}^{\rm pos})^{-1}
    -
    (\mathbf P_{c,t|t-1}^{\rm pos})^{-1}
    \succeq\mathbf0,
    \label{eq:receiver_information_matrix}
\end{equation}
so that $\mathbf P_{c,t|t}^{\rm pos}=[(\mathbf P_{c,t|t-1}^{\rm pos})^{-1}+\mathbf J_{c,t}^{\rm rx}]^{-1}$.

A coasted or gate-rejected update has $\mathbf J_{c,t}^{\rm rx}=\mathbf0$ and remains in $\mathcal I_T$ only if the following premises still hold.

\emph{Step 1---Design bound (A1):}
Since the predictive mean is one of the design points in~\eqref{eq:sigma_design_points}, the tracking constraint at this point gives
\begin{equation}
    \operatorname{tr}
    \left[
    \left(
    (\mathbf P_{f,t|t-1}^{\rm pos})^{-1}
    +\mathbf J_{c_0,t}^{\rm des}
    \right)^{-1}
    \right]
    \leq\epsilon.
    \label{eq:step1}
\end{equation}
For an applied covariance that passes the tracking post-check,~\eqref{eq:step1} is verified directly.

\emph{Step 2---Connection to the local tracker.} The first condition compares the position information realized by the receiver with that used in the transmission design. Define
\begin{equation}
    \beta_t \triangleq \max \left\{ \beta\in[0,1]: \mathbf J^{\rm rx}_{c_0,t} \succeq \beta \mathbf J^{\rm des}_{c_0,t} \right\}.
\end{equation}
Condition (A2) requires \(\beta_t>0\). The factor \(\beta_t\) therefore quantifies the fraction of the design information supported by the realized tracking update. Slots with \(\beta_t=0\) are not included in \(\mathcal I_T\).

The second condition (A3) compares the prior covariance of the reference tracker with the fused predictive covariance:
\begin{equation}
    \mathbf P^{\rm pos}_{c_0,t|t-1} \preceq \rho_t \mathbf P^{\rm pos}_{f,t|t-1}, \qquad \rho_t\ge 1.
\end{equation}
At the first slot of the interval, a valid factor can be computed directly from the two prior covariances as
\begin{equation}
\rho_{t_0}
=
\max\!\left\{
1,\,
\lambda_{\max}\!\left(
\mathbf P^{\rm pos}_{c_0,t_0|t_0-1},
\mathbf P^{\rm pos}_{f,t_0|t_0-1}
\right)
\right\},
\end{equation}
where $\lambda_{\max}(\mathbf A,\mathbf B)$ denotes the
largest generalized eigenvalue of the matrix pair
$(\mathbf A,\mathbf B)$.
For \(t>t_0\), Lemma~\ref{lem:ci_prior_comparability} applied at slot \(t-1\), followed by propagation with the common \(\mathbf F\) and \(\mathbf Q^\star\), provides the valid choice $\rho_t=\rho_{c_0,t-1}$.

Conditions (A2) and (A3) are evaluated from the realized tracking and fusion covariances over the interval. Together with (A1), they determine the intervals for which the following covariance bound applies. Define
\begin{equation}
   \bar{\epsilon} \triangleq \epsilon \max_{t\in\mathcal I_T} \max \left\{ \beta_t^{-1}, \rho_t \right\}. 
\end{equation}
Then (A1)–(A3) imply
\begin{equation}
    \operatorname{tr} \left( \mathbf P^{\rm pos}_{c_0,t|t} \right) \le \bar{\epsilon}, \qquad t\in\mathcal I_T.
    \label{eq:constraint_receiver_bridge}
\end{equation}
The proof is given in Appendix~\ref{app:proof_constraint_receiver_bridge}.

\emph{Step 3---NCV propagation.} Equation~\eqref{eq:constraint_receiver_bridge} bounds the posterior position covariance throughout the interval. The following result propagates this bound through the NCV dynamics to the remaining state components.

\begin{theorem}[Finite-interval local-covariance bound]
\label{thm:covariance_containment}
Let the reference tracker $c_0$ use the NCV prediction in \eqref{eq:local_predict_cov}, with $\Delta t>0$ and $\mathbf P_{c_0,t_0|t_0}\succ\mathbf0$. Assume every update over $\mathcal I_T$ is covariance reducing, $\mathbf0\prec\mathbf P_{c_0,t|t}\preceq\mathbf P_{c_0,t|t-1}$,
where equality represents a coasted update, and suppose that
\(
\operatorname{tr}(\mathbf B_{\rm pos}
\mathbf P_{c_0,t|t}\mathbf B_{\rm pos}^{\mathsf T})
\leq\bar\epsilon
\)
for all $t\in\mathcal I_T$. Then, for every $t_0+1\leq t\leq T$,
\begin{equation}
\begin{aligned}
    \bar V
    \triangleq{\left(
    \frac{2\sqrt{\bar\epsilon}}{\Delta t}
    +\frac{\Delta t}{2}\sqrt{q_x^{\star}+q_y^{\star}}
    \right)}^2,\qquad\\
    \operatorname{tr}
    (\mathbf B_{\rm vel}\mathbf P_{c_0,t|t}\mathbf B_{\rm vel}^{\mathsf T})
    \leq\bar V,\qquad
    \operatorname{tr}(\mathbf P_{c_0,t|t})
    \leq\bar\epsilon+\bar V,    
\end{aligned}
\end{equation}
and, for \(t_0+1\le t<T\),
\begin{equation}
\begin{split}
    \operatorname{tr}(\mathbf P_{c_0,t+1|t})
    \leq&
    {\left(\sqrt{\bar\epsilon}+\Delta t\sqrt{\bar V}\right)}^2
    +\bar V \\
    &+
    \left(\Delta t^2+\frac{\Delta t^4}{4}\right)
    (q_x^{\star}+q_y^{\star}).
\end{split}
\label{eq:covariance_containment_bounds}
\end{equation}
All dependence on the design and fusion stages enters the bounds through \(\bar{\epsilon}\). Once \(\bar{\epsilon}\) is fixed, the remaining terms depend only on the NCV parameters and the sampling interval.
\end{theorem}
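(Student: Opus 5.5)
The plan is to bound the three blocks in turn. The position bound is given, the velocity bound comes from a finite-difference argument over two consecutive slots, and the full-state and predictive bounds then follow from the block structure of $\mathbf F$ and $\mathbf Q^\star$. The one tool used repeatedly is a Minkowski-type inequality for traces of covariance blocks. If $\mathbf X=\begin{bmatrix}\mathbf X_{11}&\mathbf X_{12}\\ \mathbf X_{21}&\mathbf X_{22}\end{bmatrix}\succeq\mathbf0$ with square blocks of equal size, then $|\operatorname{tr}\mathbf X_{12}|\le\sqrt{\operatorname{tr}\mathbf X_{11}\operatorname{tr}\mathbf X_{22}}$. Consequently $\operatorname{tr}(\mathbf X_{11}+\mathbf X_{12}+\mathbf X_{21}+\mathbf X_{22})\le(\sqrt{\operatorname{tr}\mathbf X_{11}}+\sqrt{\operatorname{tr}\mathbf X_{22}})^2$. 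Equivalently, $\sqrt{\mathbb E\|\cdot\|_2^2}$ is a norm on zero-mean random error vectors.

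\emph{Velocity bound (main step).} Fix $t_0+1\le t\le T$, so that both $t-1$ and $t$ lie in $\mathcal I_T$. Interpret $\mathbf P_{c_0,s|s}$ as the (linearized) Gaussian error covariance of the tracker under the NCV model. Under the NCV dynamics,
$\mathbf v_t=\bigl(\mathbf q_t-\mathbf q_{t-1}\bigr)/\Delta t+\tfrac{\Delta t}{2}\mathbf a_{t-1}$.
Consider the velocity estimator $\tilde{\mathbf v}_t=(\hat{\mathbf q}_{t|t}-\hat{\mathbf q}_{t-1|t-1})/\Delta t$, which uses only information available at slot $t$. Its error is the sum of three zero-mean terms: $\Delta t^{-1}$ times the current position error, $-\Delta t^{-1}$ times the previous position error, and $\tfrac{\Delta t}{2}\mathbf a_{t-1}$. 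The Minkowski inequality then gives a root-mean-square error of at most $2\sqrt{\bar\epsilon}/\Delta t+\tfrac{\Delta t}{2}\sqrt{q_x^\star+q_y^\star}=\sqrt{\bar V}$.

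It remains to show that the velocity block of $\mathbf P_{c_0,t|t}$ is no larger than the error covariance of this particular estimator. This is where the covariance-reducing hypothesis $\mathbf P_{c_0,t|t}\preceq\mathbf P_{c_0,t|t-1}$ is used, together with the fact that the posterior is the minimum-error-covariance linear estimate given the data up to $t$, which include those up to $t-1$.

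I expect this step to be the main obstacle, because the statement is phrased purely in matrix terms. My first attempt would make the argument matricial. Form the joint covariance of the stacked errors at $t-1$ and $t$ from the Joseph-form/linear-gain recursion. Write the velocity error of the actual posterior as the projection (smallest-covariance representative) among estimators in the span of the information. Then apply the block-trace inequality above in place of expectations. If that route stalls, I would state explicitly that the bound holds for the linearized model in which $\mathbf P$ is the exact error covariance.

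\emph{Full state and prediction.} The full-state bound is immediate: $\operatorname{tr}\mathbf P_{c_0,t|t}=\operatorname{tr}(\mathbf B_{\rm pos}\mathbf P\mathbf B_{\rm pos}^{\mathsf T})+\operatorname{tr}(\mathbf B_{\rm vel}\mathbf P\mathbf B_{\rm vel}^{\mathsf T})\le\bar\epsilon+\bar V$. For the prediction, use \eqref{eq:local_predict_cov}. The position block of $\mathbf F\mathbf P_{c_0,t|t}\mathbf F^{\mathsf T}$ is the covariance of $\mathbf e_{\rm pos}+\Delta t\,\mathbf e_{\rm vel}$, whose trace is at most $(\sqrt{\bar\epsilon}+\Delta t\sqrt{\bar V})^2$ by the block-trace inequality. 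Its velocity block equals the posterior velocity block, with trace at most $\bar V$. Finally, $\operatorname{tr}\mathbf Q^\star=\operatorname{tr}(\mathbf G\operatorname{diag}(q_x^\star,q_y^\star)\mathbf G^{\mathsf T})=(\Delta t^4/4+\Delta t^2)(q_x^\star+q_y^\star)$. Summing the three contributions gives \eqref{eq:covariance_containment_bounds} for $t<T$. Every constant other than $\bar\epsilon$ comes from $\Delta t$ and $\mathbf Q^\star$ alone, which yields the closing remark of the statement.
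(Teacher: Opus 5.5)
Your full-state and predictive bounds are correct and match the paper's argument: Cauchy--Schwarz on the cross block of $\mathbf F\mathbf P\mathbf F^{\mathsf T}$, plus $\operatorname{tr}\mathbf Q^\star=(\Delta t^2+\Delta t^4/4)(q_x^\star+q_y^\star)$. The gap is the velocity step, exactly where you expected it.

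The theorem is a statement about the matrices the EKF reports. Its only update hypothesis is $\mathbf0\prec\mathbf P_{c_0,t|t}\preceq\mathbf P_{c_0,t|t-1}$. Your comparison rests on two facts that are not available here: that $\mathbf P_{c_0,s|s}$ is the actual error covariance, and that the posterior mean is the minimum-error-covariance estimator given the data. In this paper the tracker runs NCV with a calibrated $\mathbf Q^\star$ against mismatched UAV motion. It also uses $\mathbf R^{\rm rep}$ while the noise is drawn from $\mathbf R^{\rm true}$, linearizes at data-dependent points, and gates. So the reported $\mathbf P$ is in general neither the true error covariance nor that of an optimal estimator. Your matricial plan via Joseph-form gains and projections needs gain structure that is not among the hypotheses. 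Your fallback, restricting to a model where $\mathbf P$ is exact, proves a weaker theorem. Note also that in an exact Kalman filter covariance reduction is automatic, so your argument as written never actually uses the hypothesis the theorem is built on.

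There are two ways to close the step.

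(i) Keep your estimator argument, but \emph{realize} the matrices. Set $\mathbf J_s=\mathbf P_{c_0,s|s}^{-1}-\mathbf P_{c_0,s|s-1}^{-1}$, which is $\succeq\mathbf0$ precisely because of covariance reduction. Consider the fictitious linear-Gaussian model with $\mathbf x_{t_0}\sim\mathcal N(\mathbf0,\mathbf P_{c_0,t_0|t_0})$, NCV dynamics with $\mathbf Q^\star$, and measurements $\mathbf y_s=\mathbf J_s^{1/2}\mathbf x_s+\mathbf n_s$ with $\mathbf n_s\sim\mathcal N(\mathbf0,\mathbf I_4)$. Its Kalman recursion reproduces the given matrices exactly, so in that model they are MMSE error covariances. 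Your finite-difference estimator is a function of $\mathbf y_{t_0+1:t}$, so MMSE optimality plus your Minkowski bound give $\operatorname{tr}(\mathbf B_{\rm vel}\mathbf P_{c_0,t|t}\mathbf B_{\rm vel}^{\mathsf T})\le\bar V$. Since the conclusion involves only the matrices, it transfers back to the EKF.

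(ii) Use the paper's direct route, which is the matrix shadow of your finite difference. With $\mathbf C=\mathbf B_{\rm vel}-\mathbf B_{\rm pos}/\Delta t$ one has $\mathbf C\mathbf F=-\mathbf B_{\rm pos}/\Delta t$ and $\mathbf C\mathbf G=\tfrac{\Delta t}{2}\mathbf I_2$. The position bound at $t-1$ then gives $\operatorname{tr}(\mathbf C\mathbf P_{c_0,t|t-1}\mathbf C^{\mathsf T})\le\bar\epsilon/\Delta t^2+\tfrac{\Delta t^2}{4}(q_x^\star+q_y^\star)$, and covariance reduction transfers this to $\mathbf P_{c_0,t|t}$. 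Finally, apply the triangle inequality for $\mathbf X\mapsto[\operatorname{tr}(\mathbf X\mathbf P_{c_0,t|t}\mathbf X^{\mathsf T})]^{1/2}$ to $\mathbf B_{\rm vel}=\mathbf C+\mathbf B_{\rm pos}/\Delta t$. Together with the position bound at $t$ and $\sqrt{a+b}\le\sqrt a+\sqrt b$, this yields $\sqrt{\bar V}$ with no optimality or joint error covariance needed.
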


The proof is given in Appendix~\ref{app:proof_covariance_containment}.

The term
\(2\sqrt{\bar{\epsilon}}/\Delta t\) in \(\sqrt{\bar V}\) arises from bounding velocity through consecutive position-covariance bounds, while
\((\Delta t/2)\sqrt{q_x^\star+q_y^\star}\)
accounts for the one-slot process-noise contribution.
Theorem~\ref{thm:covariance_containment} characterizes the covariance recursion of the reference tracker over intervals satisfying (A1)–(A3); realized estimation accuracy is evaluated separately in Section~\ref{sec:simulation_results}.

\section{Simulation Results and Discussion}
\label{sec:simulation_results}

\subsection{Simulation Setup and Statistical Protocol}
\label{subsec:eval_protocol}

Evaluation seeds are the statistical units; all slot-, operator-, and trajectory-level observations are aggregated within each seed before paired inference. Paired arms share the trajectory, channel, availability, and measurement-noise realizations~\cite{rubinstein1985commonRandom}. We report normalized estimation error squared (NEES), normalized innovation squared (NIS), 95\% state coverage, root-mean-square error (RMSE), normalized total power, and numerical feasibility checks. NEES and NIS are normalized by their respective dimensions, \(4\) and \(3|\mathcal V|\). We use $0.75\le {\rm NEES}/4\le1.25$ and $0.93\le {\rm coverage}\le0.97$ as predeclared descriptive bands around the nominal values. These bands are used for summary and graphical comparisons rather than as formal statistical acceptance regions. Statistical tests are applied to paired seed-level summaries, with the specific test reported for each inferential comparison.

All online methods start from the common acquisition posterior described in Section~\ref{subsec:information_boundary}, and evaluation begins with the first subsequent sensing update. Table~\ref{tab:simulation_parameters} summarizes the network, radio, calibration, tracking and fusion, QoS control, and evaluation settings. Following Algorithm~\ref{alg:closed_loop_isac_tracking}, all finite projected solutions are applied and retained in the continuous metrics; post-checks provide numerical certification and do not determine sample inclusion. The tracking threshold \(\epsilon=34~{\rm m}^2\) was selected from \(\{16,20,24,28,34,42\}~{\rm m}^2\) using development seeds 0–2. The 0.999 NIS gate was fixed during a separate implementation audit involving seed 9, after which all reported comparison arms were rerun under the same frozen configuration.

\begin{table}[!t]
\centering
\caption{Main simulation and evaluation settings.}
\label{tab:simulation_parameters}
\scriptsize
\setlength{\tabcolsep}{2pt}
\begin{tabular}{@{}>{\raggedright\arraybackslash}p{0.26\columnwidth}>{\raggedright\arraybackslash}p{0.68\columnwidth}@{}}
\hline
Block & Main setting\\
\hline
Network and geometry
& $C=2$, $B_c=4$, $K_c=2$, $N_t=8$; $h_{\rm U}=50$~m,
$h_{\rm BS}=10$~m, and $\Delta t=0.1$~s.\\
Frequency
& 28~GHz / 50~MHz.\\
Calibration
& 8 training / 2 held-out trajectories; Adam, $\text{lr}=0.05$, 150 local epochs, $\lambda = 10^{-4}$\\
Tracking and fusion
& NIS gate at 0.999 chi-square quantile; trace-CI with 101-point grid.\\
QoS control
& $5$-dB SINR floor; $P^{\max}=100$ normalized units per BS; five design points
with $r_{\rm d}=\sqrt{2}$; $\epsilon=34~\mathrm{m^2}$.\\
Evaluation
& 2 held-out trajectories and 11 post-acquisition slots per trajectory;
10 evaluation seeds per regime.\\
\hline
\end{tabular}
\end{table}

\begin{figure}[!t]
\centering
\includegraphics[width=\linewidth]{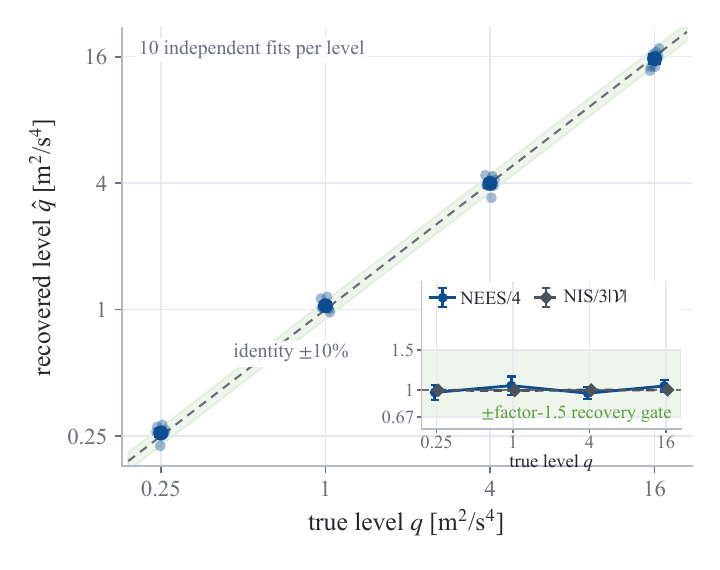}
\caption{Recovered versus tested process-covariance level. Markers show individual fits and per-level means with 95\% intervals; the dashed line denotes exact recovery. The inset reports normalized NEES and NIS against the prespecified recovery region.}
\label{fig:covariance_learning}
\end{figure}

\subsection{Process-Covariance Recovery and Calibration}
\label{subsec:native_radar_recovery}

Fig.~\ref{fig:covariance_learning} shows that the innovation-based criterion accurately recovers the tested process-covariance levels under model match. Across four held-out levels $q^\circ\in\{0.25,1,4,16\}$, the recovered-to-tested ratio was $0.977$--$1.045$, while normalized NIS and NEES were $0.997$--$1.004$ and $0.963$--$1.058$, respectively. The predeclared recovery criteria require the recovered covariance level to lie within \(\pm10\%\) of \(q^\circ\) and both normalized NIS and NEES to lie within the factor-\(1.5\) interval \([2/3,\,3/2]\). All four tested levels satisfied these criteria. The same calibration procedure was then applied to each motion-mismatch regime, and the resulting aggregate was fixed as \(\mathbf Q^\star\) throughout that regime's online evaluation.

The federated calibration is compared with a fixed hand-tuned baseline and a centralized pooled-data reference under identical trajectories, observations, and random seeds. The centralized reference uses the same innovation-based calibration objective but pools records from all operators. The baseline uses the isotropic level $q_{\rm HT}=4~\mathrm{m^2/s^4}$, corresponding to a slot-wise acceleration standard deviation of \(2~\mathrm{m/s^2}\), and this value is kept unchanged across all motion regimes. Statistical inference focuses on the heavy-mismatch regime, while the exact-NCV and gentle-mismatch regimes provide reference cases.

As shown in Fig.~\ref{fig:calibration_mismatch}, the calibration benefit becomes pronounced under heavy process-model mismatch. Calibration reduced the mean normalized NEES from \(3.855\) to \(1.110\) and increased 4-D state coverage from \(0.663\) to \(0.904\). The centralized reference shows a similar calibration trend across the tested regimes. Across the ten paired evaluation seeds, two-sided Wilcoxon signed-rank tests showed a smaller absolute deviation from unit normalized NEES after calibration (\(W=4,\;p=0.014\)), while the paired RMSE difference was not statistically resolved (\(W=21,\;p=0.557\)). These results indicate that calibration primarily improves the consistency of the reported tracking uncertainty rather than the point-estimation accuracy, which is the role of \(\mathbf Q^\star\) in the subsequent fusion and transmission design.

\begin{figure*}[!t]
\centering
\includegraphics[width=\linewidth]{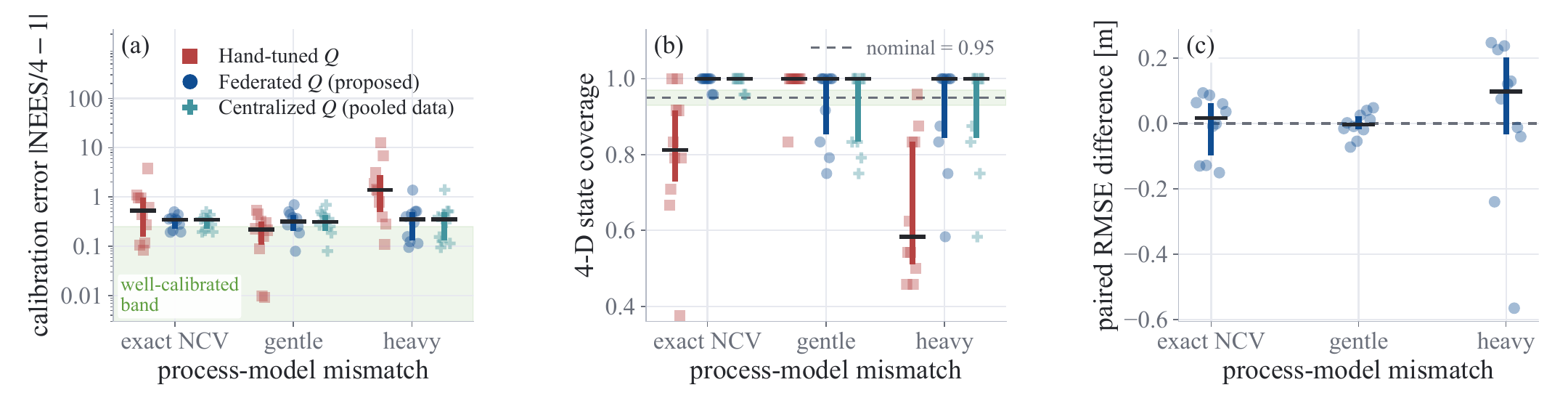}
\caption{Calibration under process-model mismatch. (a) Normalized-NEES error and (b) 4-D state coverage for the hand-tuned, federated, and centralized pooled-data calibrations; (c) paired RMSE difference between the federated and hand-tuned covariances. Faint markers denote evaluation seeds; bars show median/IQR. Dashed lines and shaded regions denote the corresponding reference targets.}
\label{fig:calibration_mismatch}
\end{figure*}

\begin{figure}[!t]
\centering
\includegraphics[width=\linewidth]{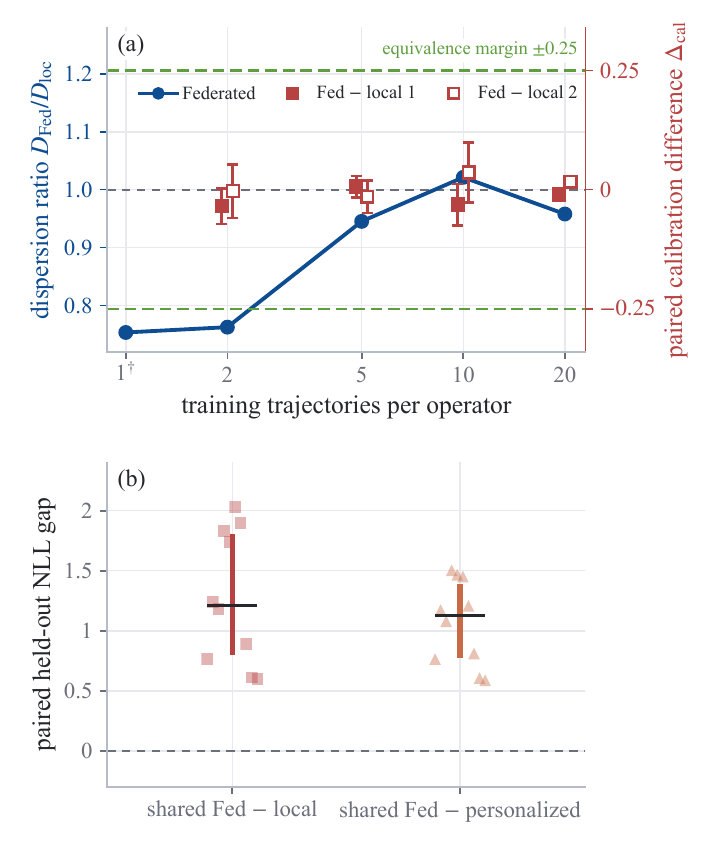}
\caption{Federation scope. (a) Federated-to-local dispersion ratio and paired calibration difference versus training trajectories per operator. Error bars show 90\% intervals; dashed bounds denote the calibration-equivalence margin. The dagger denotes a nonconverged run at one trajectory per operator. (b) Paired excess held-out NLL of the shared model relative to local and personalized calibration.}
\label{fig:federation_scope}
\end{figure}

\begin{figure}[!t]
\centering
\includegraphics[width=\linewidth]{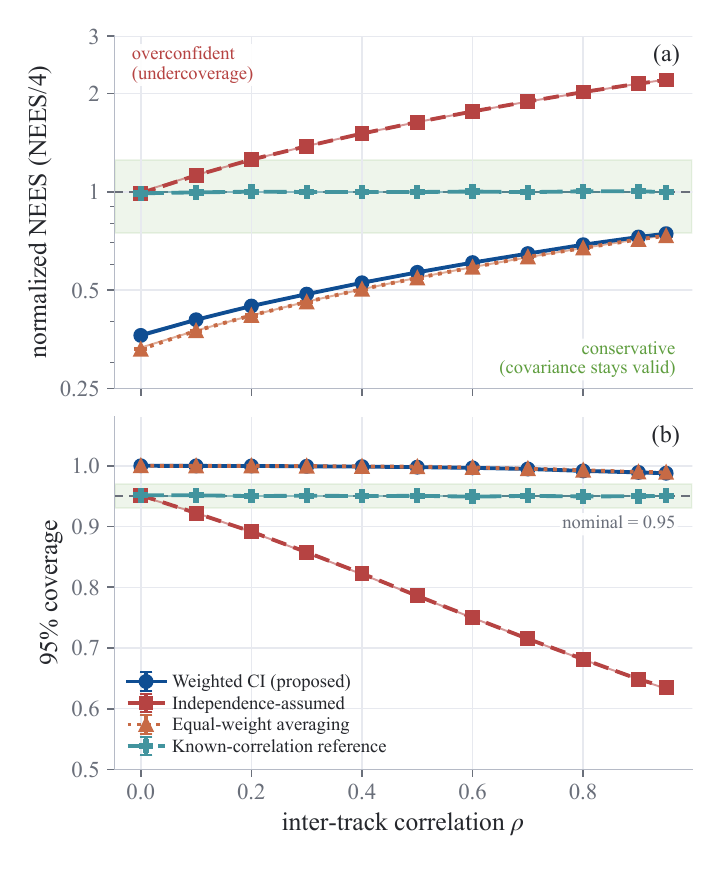}
\caption{Fusion under a fixed-marginal Gaussian correlation stress test. Panels show (a) normalized NEES and (b) 95\% state coverage as the shared-error correlation increases while each local marginal covariance is held fixed. Error bars denote 95\% intervals over ten seed-level means, with 500 samples per seed and condition.}
\label{fig:ci_correlated}
\end{figure}

\begin{figure*}[!t]
\centering
\includegraphics[width=\linewidth]{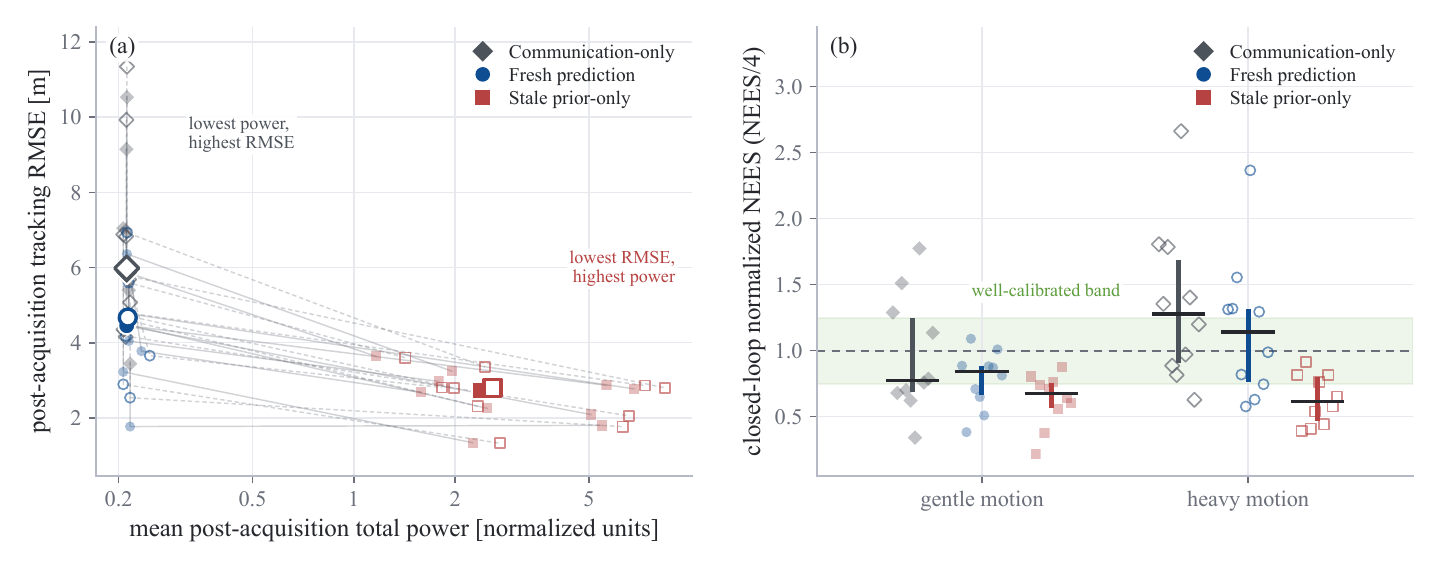}
\caption{Closed-loop performance: (a) post-acquisition power–RMSE operating points and (b) normalized NEES by controller input and motion regime. Faint markers denote seed-level summaries; large markers and bars show medians and IQRs.}
\label{fig:end_to_end_tradeoff}
\end{figure*}

\subsection{Federation Scope under Data Scarcity and Physical Heterogeneity}
\label{subsec:federation_closure}

Fig.~\ref{fig:federation_scope}(a) shows that federation primarily reduces the dispersion of the estimated process-covariance level when local data are scarce. Here, \(D\) denotes the sample standard deviation of the fitted process-covariance level across the ten evaluation seeds; \(D_{\rm Fed}\) is computed for the federated estimate, while \(D_{\rm loc}\) is the average of the corresponding dispersions from the two local estimates. The ratio $D_{\rm Fed}/D_{\rm loc}$ was $0.753$ and $0.762$ with one and two training trajectories per operator, respectively, and approached unity as local data increased. To assess whether this stabilization altered calibration quality, we define $\Delta_{\rm cal,j}=\lvert\operatorname{NEES}_{{\rm Fed},j}/4-1\rvert-\lvert\operatorname{NEES}_{{\rm loc},j}/4-1\rvert$. For two or more training trajectories per operator, both local comparisons remained within the prespecified equivalence margin $[-0.25,0.25]$, and paired two one-sided tests (TOSTs) confirmed calibration equivalence \cite{lakens2017equivalence}. Thus, the dispersion benefit under data scarcity is obtained without a resolved calibration penalty. The single run that did not meet the convergence criterion is retained in the reported dispersion statistic.

This benefit did not extend uniformly to strongly heterogeneous operators. Under strong operator heterogeneity, the median max-to-min empirical acceleration-variance ratio was $26.6$ across evaluation seeds. The complementary heterogeneity experiment in Fig.~\ref{fig:federation_scope}(b) evaluates the shared, local, and personalized calibrations using held-out innovation negative log-likelihood (NLL), normalized by the number of valid BS-time tuples. The shared-minus-local and shared-minus-personalized NLL gaps had medians of $1.212$ and $1.126$ with IQRs of \([0.798,1.806]\) and \([0.776,1.391]\), respectively. Both gaps were positive for all ten evaluation seeds, indicating consistently better held-out likelihood for the local and personalized calibrations. These results identify the practical boundary of a single shared covariance: federation is most useful when local data are scarce and motion statistics are sufficiently similar, whereas local or personalized calibration becomes more appropriate under strong heterogeneity.

\subsection{Track Fusion under Unknown Correlation}
\label{subsec:correlation_robust_fusion}

We evaluate fusion under increasing correlation between local tracking errors using a synthetic Gaussian stress test. The marginal error covariance of each local track is held fixed across all correlation levels, while the shared error component is varied to control the correlation between tracks. We compare weighted CI with independence-assumed fusion, equal-weight averaging, and a known-correlation reference. These respectively ignore the error cross-correlations, average the local means and marginal covariances, and use the true joint error covariance.

Fig.~\ref{fig:ci_correlated} shows the resulting normalized NEES and 95\% state coverage over ten independent evaluation seeds, with 500 samples per seed and correlation level. As the inter-track correlation increases, independence-assumed fusion becomes increasingly overconfident, with rising normalized NEES and decreasing coverage. In contrast, weighted CI remains conservative, with normalized NEES below unity and coverage above the nominal level throughout the tested range. The known-correlation reference remains close to the nominal consistency targets, while equal-weight averaging exhibits a similar conservative trend to CI. These results show that CI avoids the severe overconfidence caused by treating correlated tracks as independent, while preserving conservative covariance estimates without knowledge of the error cross-correlations.
\subsection{Closed-Loop Power--Tracking Performance}
\label{subsec:end_to_end_tradeoff}

We compare three controller inputs: Fresh uses the current fused prediction, Stale propagates the acquisition summary, and Communication-only removes the tracking-information constraint. Fig.~\ref{fig:end_to_end_tradeoff}(a) shows the resulting power--tracking operating points. Across both motion regimes, Fresh reduced RMSE by $31.0$--$31.7\%$ relative to Communication-only while increasing total power by only $0.9$--$1.7\%$. Stale achieved lower RMSE, but required $16.0\times$--$19.2\times$ the Fresh power. Table~\ref{tab:end_to_end_results} summarizes the corresponding closed-loop performance and numerical post-check results.

Table~\ref{tab:controller_input_ablation} isolates the source of the Fresh--Stale contrast through a $2\times2$ predictive mean or covariance ablation. Predictive-mean staleness produced no resolved change in power or RMSE, whereas covariance staleness increased total power by $3.22$--$4.02$ normalized units and reduced RMSE by $1.76$--$2.00$ m. The negligible mean effect cannot be attributed to nearly identical means: fresh and stale predictive means differed by $5.12$--$5.52$ m, while stale position-covariance traces were $6.55\times$--$7.21\times$ larger. The Fresh--Stale operating-point shift is therefore primarily driven by the freshness of the predictive covariance.

Fig.~\ref{fig:end_to_end_tradeoff}(b) shows that only two to six of ten seeds per controller–regime combination fell within the declared NEES band, so the aggregate medians should not be interpreted as calibration for every individual seed. At the median level, Fresh remained within or near the band, whereas Stale was systematically conservative.

SINR post-checks passed in all $220/220$ outcomes for every arm--regime cell. Tracking certification passed in $220/220$ and $219/220$ Fresh outcomes in the gentle and heavy regimes, respectively, compared with $197/220$ and $195/220$ for Stale. The largest tracking-constraint violation among uncertified slots was only \(1.97\times10^{-4}\,\mathrm{m}^2\), which is far below the threshold. Restricting the comparison to slots certified for both arms preserved the power contrast, with median Fresh-to-Stale power ratios of $0.087$ and $0.076$, respectively.
\begin{table}[!t]
\centering
\caption{POST-ACQUISITION CLOSED-LOOP PERFORMANCE.}
\label{tab:end_to_end_results}
\scriptsize
\setlength{\tabcolsep}{2pt}
\begin{tabular*}{\columnwidth}{@{\extracolsep{\fill}}cccccccc@{}}
\hline
Regime & Controller & \shortstack{Power\\(norm.)} & \shortstack{RMSE(m)} & NEES/4 & Coverage & \shortstack{SINR\\post-check} & \shortstack{Tracking\\post-check}\\
\hline
Gentle & Comm.-only & 0.2117 & 6.328 & 0.959 & 0.945 & 220/220 & N/A\\
Gentle & Fresh & 0.2136 & 4.323 & 0.780 & 1.000 & 220/220 & 220/220\\
Gentle & Stale & 3.4207 & 2.564 & 0.628 & 1.000 & 220/220 & 197/220\\
Heavy & Comm.-only & 0.2117 & 6.620 & 1.350 & 0.859 & 220/220 & N/A\\
Heavy & Fresh & 0.2154 & 4.569 & 1.160 & 0.895 & 220/220 & 219/220\\
Heavy & Stale & 4.1344 & 2.566 & 0.631 & 1.000 & 220/220 & 195/220\\
\hline
\end{tabular*}
\parbox{0.98\columnwidth}{\scriptsize \emph{Notes:} Continuous metrics are means of seed-level summaries over all planned slots. Checks report passed/attempted numerical certifications; N/A denotes no tracking constraint.}
\end{table}

\begin{table}[!t]
\centering
\caption{Paired controller-input ablation.}
\label{tab:controller_input_ablation}
\scriptsize
\setlength{\tabcolsep}{2pt}
\begin{tabular*}{\columnwidth}{@{\extracolsep{\fill}}llcc@{}}
\hline
Regime & Metric & Mean stale [95\% CI] & Cov. stale [95\% CI] \\
\hline
Gentle & Power (norm.) & $-0.008~[-0.078,\,0.062]$ & $+3.215~[1.697,\,4.733]$ \\
Heavy  & Power (norm.) & $-0.113~[-0.306,\,0.080]$ & $+4.021~[1.974,\,6.068]$ \\
Gentle & RMSE (m)      & $-0.002~[-0.015,\,0.011]$ & $-1.757~[-2.378,\,-1.135]$ \\
Heavy  & RMSE (m)      & $+0.001~[-0.016,\,0.017]$ & $-2.001~[-2.607,\,-1.395]$ \\
\hline
\end{tabular*}
\parbox{0.98\columnwidth}{\scriptsize
\emph{Notes:} Effects are relative to Fresh. “Stale mean only” uses \((\boldsymbol\mu_S,\mathbf P_F)\), whereas “Stale covariance only” uses \((\boldsymbol\mu_F,\mathbf P_S)\).}
\end{table}

\section{Conclusion}
\label{sec:conclusion}

This paper developed a framework for calibrated tracking uncertainty in multi-operator UAV ISAC. Local radar innovations calibrate the process covariance associated with unmodeled UAV motion, CI combines synchronized track summaries under unknown error correlations, and the fused prediction guides each operator's transmission design. Simulations showed improved uncertainty consistency under process-model mismatch and about a 31\% RMSE reduction relative to Communication-only with less than 2\% additional power; the crossed ablation further shows that the Fresh–Stale difference is driven mainly by the freshness of the predictive covariance. The formulation also admits exact rank-one recovery and conditional finite-interval bounds on local tracking covariances.

\appendix
\section{Supporting Analysis and Formal Proofs}
\label{app:formal_proofs}

\subsection{Proof of Lemma~\ref{lem:ci_prior_comparability}}
\label{app:proof_ci_prior_comparability}
\begin{proof}
Since \(\mathbf P_{j,t|t}\succ0\) for every \(j\in\mathcal A_t\), the information matrices in~\eqref{eq:information_form_track} are positive definite. By the definition of $\kappa_{j|c,t}$,
$\boldsymbol\Lambda_{j,t}
\preceq\kappa_{j|c,t}\boldsymbol\Lambda_{c,t}$.

Hence,
\[
\begin{aligned}
    \boldsymbol\Lambda_{f,t}(\boldsymbol\omega)
    =\sum_{j\in\mathcal A_t}\omega_j\boldsymbol\Lambda_{j,t}\preceq
    \left(\sum_{j\in\mathcal A_t}\omega_j\kappa_{j|c,t}\right)\boldsymbol\Lambda_{c,t}
    \preceq\rho_{c,t}\boldsymbol\Lambda_{c,t}.
\end{aligned}
\]
Since inversion reverses the Loewner order,
$\mathbf P_{c,t|t}\preceq\rho_{c,t}\mathbf P_{f,t|t}(\boldsymbol\omega)$.
Under the common prediction map and $\rho_{c,t}\geq1$,
\[
\begin{aligned}
    \mathbf P_{c,t+1|t}
    &=\mathbf F\mathbf P_{c,t|t}\mathbf F^{\mathsf T}+\mathbf Q^\star\\
    &\preceq\rho_{c,t}\mathbf F\mathbf P_{f,t|t}(\boldsymbol\omega)\mathbf F^{\mathsf T}+\mathbf Q^\star\\
    &\preceq\rho_{c,t}\mathbf P_{f,t+1|t}.
\end{aligned}
\]
Congruence with \(\mathbf B_{\rm pos}\) gives the corresponding position-block inequality.
\end{proof}

\subsection{Proof of \eqref{eq:constraint_receiver_bridge}}
\label{app:proof_constraint_receiver_bridge}
\begin{proof}
By the definition of $\mathbf J_{c_0,t}^{\rm rx}$,
\[
    \mathbf P_{c_0,t|t}^{\rm pos}
    =
    \left[
    (\mathbf P_{c_0,t|t-1}^{\rm pos})^{-1}
    +\mathbf J_{c_0,t}^{\rm rx}
    \right]^{-1}.
\]
Conditions (A2) and (A3) imply
\[
    (\mathbf P_{c_0,t|t-1}^{\rm pos})^{-1}
    +\mathbf J_{c_0,t}^{\rm rx}
    \succeq
    \rho_t^{-1}
    (\mathbf P_{f,t|t-1}^{\rm pos})^{-1}
    +\beta_t\mathbf J_{c_0,t}^{\rm des}.
\]
Hence,
\[
    (\mathbf P_{c_0,t|t-1}^{\rm pos})^{-1}
    +\mathbf J_{c_0,t}^{\rm rx}
    \succeq
    \frac{
    (\mathbf P_{f,t|t-1}^{\rm pos})^{-1}
    +\mathbf J_{c_0,t}^{\rm des}}
    {\max\{\rho_t,\beta_t^{-1}\}}.
\]
After inversion,
\[
    \mathbf P_{c_0,t|t}^{\rm pos}
    \preceq
    \max\{\rho_t,\beta_t^{-1}\}
    \left[
    (\mathbf P_{f,t|t-1}^{\rm pos})^{-1}
    +\mathbf J_{c_0,t}^{\rm des}
    \right]^{-1}.
\]
Using (A1) and the definition of $\bar\epsilon$ gives
$\operatorname{tr}
(\mathbf P_{c_0,t|t}^{\rm pos})
\leq \bar\epsilon.$

\end{proof}

\subsection{Proof of Proposition~\ref{prop:rank_one_recovery}}
\label{app:proof_rank_one_recovery}
\begin{proof}
Feasibility, $\gamma_{c,k}>0$, and $\sigma_{c,k}^2>0$ imply
\[
    d_{c,k,t}
    =\mathbf h_{c,k,t}^{\mathsf H}\mathbf W_{c,k,t}\mathbf h_{c,k,t}
    \geq\gamma_{c,k} I_{c,k,t}
    \geq\gamma_{c,k}\sigma_{c,k}^2>0.
\]
Let $\mathbf u_{c,k,t}=\mathbf W_{c,k,t}^{1/2}\mathbf h_{c,k,t}$. Then
$d_{c,k,t}=\|\mathbf u_{c,k,t}\|_2^2$ and
\[
\boldsymbol\Delta_{c,k,t}
=\mathbf W_{c,k,t}^{1/2}
\left(\mathbf I-
\frac{\mathbf u_{c,k,t}\mathbf u_{c,k,t}^{\mathsf H}}
{\|\mathbf u_{c,k,t}\|_2^2}\right)
\mathbf W_{c,k,t}^{1/2}\succeq\mathbf0.
\]
Moreover,
\[
    \mathbf h_{c,k,t}^{\mathsf H}\boldsymbol\Delta_{c,k,t}\mathbf h_{c,k,t}=0,
    \qquad
    \mathbf h_{c,k,t}^{\mathsf H}\bar{\mathbf W}_{c,k,t}\mathbf h_{c,k,t}=d_{c,k,t}.
\]
The transformed variables satisfy
$\bar{\mathbf W}_{c,k,t}=\mathbf W_{c,k,t}-\boldsymbol\Delta_{c,k,t}$ and
$\bar{\mathbf S}_{c,t}=\mathbf S_{c,t}+\sum_j\boldsymbol\Delta_{c,j,t}$, so the total
covariance is unchanged. For every user $k$, the transformed interference term
is
\begin{align*}
\bar I_{c,k,t}
&=\sum_{\ell\neq k}\mathbf h_{c,k,t}^{\mathsf H}
 (\mathbf W_{c,\ell,t}-\boldsymbol\Delta_{c,\ell,t})\mathbf h_{c,k,t}\\
&\quad+\mathbf h_{c,k,t}^{\mathsf H}
 \left(\mathbf S_{c,t}+\sum_j\boldsymbol\Delta_{c,j,t}\right)\mathbf h_{c,k,t}+\sigma_{c,k}^2\\
&=I_{c,k,t}+\mathbf h_{c,k,t}^{\mathsf H}\boldsymbol\Delta_{c,k,t}\mathbf h_{c,k,t}
=I_{c,k,t}.
\end{align*}
The desired signal power is also unchanged. Hence all SINR constraints are preserved. Since the total covariance is unchanged, the objective, per-BS power constraints, and all total-covariance tracking constraints are preserved.
\end{proof}

\subsection{Proof of Theorem~\ref{thm:covariance_containment}}
\label{app:proof_covariance_containment}
\begin{proof}
Fix \(s\in\{t_0,\ldots,T-1\}\), suppress \(c_0\), and let $q_\Sigma=q_x^\star+q_y^\star$,
$\mathbf P=\mathbf P_{s|s},\quad \mathbf P^-=\mathbf P_{s+1|s},\quad \mathbf P^+=\mathbf P_{s+1|s+1}\preceq\mathbf P^- .$
Define $\mathbf C \triangleq \mathbf B_{\rm vel} - {\mathbf B_{\rm pos}}/{\Delta t}$. From the NCV matrices in~\eqref{eq:ncv_block_matrices},
$$\mathbf C\mathbf F = -\frac{\mathbf B_{\rm pos}}{\Delta t}, \qquad \mathbf C\mathbf G = \frac{\Delta t}{2}\mathbf I_2,$$
and hence
$$\operatorname{tr} (\mathbf C\mathbf P^-\mathbf C^T) \le \frac{\bar\epsilon}{\Delta t^2} + \frac{\Delta t^2}{4}q_\Sigma.$$

Using the semi-norm $\|\mathbf X\|_{\mathbf P} \triangleq [\operatorname{tr}(\mathbf X\mathbf P\mathbf X^T)]^{1/2}$, covariance reduction, the triangle inequality, and the position bound at \(s+1\) give

$$ \begin{aligned} \|\mathbf B_{\rm vel}\|_{\mathbf P^+} &\le \|\mathbf C\|_{\mathbf P^+} + \frac{1}{\Delta t} \|\mathbf B_{\rm pos}\|_{\mathbf P^+}\\ &\le \|\mathbf C\|_{\mathbf P^-} + \frac{\sqrt{\bar\epsilon}}{\Delta t}\\ &\le \sqrt{ \frac{\bar\epsilon}{\Delta t^2} + \frac{\Delta t^2}{4}q_{\Sigma} } + \frac{\sqrt{\bar\epsilon}}{\Delta t}\\ &\le \frac{2\sqrt{\bar\epsilon}}{\Delta t} + \frac{\Delta t}{2}\sqrt{q_{\Sigma}}. \end{aligned} $$
This proves the velocity bound, and the position bound immediately gives
$$ \operatorname{tr}(\mathbf P^+) \le \bar\epsilon+\bar V. $$

For \(s\ge t_0+1\), by using Cauchy–Schwarz,
$$|\operatorname{tr}(\mathbf B_{\rm pos}\mathbf P\mathbf B_{\rm vel}^\mathsf{T})| \leq
\left[
\operatorname{tr}(\mathbf B_{\rm pos}\mathbf P\mathbf B_{\rm pos}^{\mathsf T})
\operatorname{tr}(\mathbf B_{\rm vel}\mathbf P\mathbf B_{\rm vel}^{\mathsf T})
\right]^{\frac{1}{2}}\leq \sqrt{\bar\epsilon\bar V}.$$
Expanding
\(\mathbf P^-=\mathbf F\mathbf P\mathbf F^T+\mathbf Q^\star\)
and applying the position, velocity, and cross-covariance bounds yields
$$ \operatorname{tr}(\mathbf P^-) \le \left( \sqrt{\bar\epsilon} + \Delta t\sqrt{\bar V} \right)^2 + \bar V + \left( \Delta t^2+\frac{\Delta t^4}{4} \right)q_\Sigma, $$
which proves~\eqref{eq:covariance_containment_bounds}.
\end{proof}

\bibliographystyle{IEEEtran}
\balance
\bibliography{references}

\vfill

\end{document}